\newtheorem{theorem}{Theorem}[section]
\newtheorem{corollary}{Corollary}[section]
\newtheorem{thm}{Theorem}[section]
\newcommand{\Tr}{{\rm Tr}}
\newcommand{\Trn}{{\rm Tr}_n}
\def\+{\oplus}
\def\F{{\mathbb F}}
\def\F{{\mathbb F}}
\def\00{{\bf 0}}
\def\11{{\bf 1}}
\def\+{\oplus}
\def\\{\cr}
\def\({\left(}
\def\){\right)}
\newcommand{\cardinality}[1]{\# #1}
\providecommand{\newoperator}[3]{%
  \newcommand*{#1}{\mathop{#2}#3}}
\newoperator{\FD}{\mathrm{FD}}{\nolimits}
\begin{document}
\title{\bf The $c$-differential spectrum of $x\mapsto x^{\frac{p^n+1}{2}}$ in finite fields of odd characteristics}
%\acknowledgement{bb}
% \author{Pantelimon~St\u anic\u a et al.%\thanks{$^*$Corresponding author}
%\thanks{P. St\u anic\u a is with Applied Mathematics Department, Naval Postgraduate School, Monterey, CA 93943, USA. %\protect\\ 
%E-mail: pstanica@nps.edu}
\author{\Large Constanza Riera$^1$,   Pantelimon St\u anic\u a$^2$\footnote{Corresponding author}, \, Haode Yan$^3$\\
\vspace{.01cm}\\
\small $^1$Department of Computer Science, Electrical Engineering, \\
\small and Mathematical Sciences,\\
\small  Western Norway University of Applied Sciences, \\
\small 5020 Bergen, Norway; {\tt csr@hvl.no}\\ 
\small $^2$ Department of Applied Mathematics, Naval Postgraduate School,\\
\small Monterey, CA 93943-5212, U.S.A.; {\tt pstanica@nps.edu}\\
\small $^3$ School of Mathematics, Southwest Jiaotong University,\\
\small Chengdu 610031, China; {\tt hdyan@swjtu.edu.cn}
}
\date{}

\maketitle

\begin{abstract}
In the paper, we concentrate on  the map $x\mapsto x^{\frac{p^n+1}{2}}$ on $\F_{p^n}$ and using combinatorial and number theory techniques, we compute its detailed $c$-differential spectrum for all values of $c\neq 1$ (the spectrum for $c=1$ is known).  
\end{abstract}

\noindent
{\bf Keywords.} differential and $c$-differential uniformity, cyclotomic numbers, character sums, spectrum, elliptic curves\newline
{\bf Mathematics Subject Classification 2020:} 11L10, 11T24, 11T71, 12E20, 94A60

\section{Introduction and basic definitions}
 
 We will introduce here only some basic notations and definitions on Boolean and $p$-ary functions (where $p$ is an odd prime); the reader can consult~\cite{Bud14,CarletBook21,CS17,MesnagerBook,Tok15}, for more on these objects and their cryptographic properties.

For a positive integer $n$ and $p$ a prime number, we denote by $\F_p^n$ the $n$-dimensional vector space over $\F_p$, and by $\F_{p^n}$ the  finite field with $p^n$ elements, while $\F_{p^n}^*=\F_{p^n}\setminus\{0\}$ will denote the multiplicative group. For $a\neq 0$, we often write $\frac{1}{a}$ to mean the inverse of $a$ in the multiplicative group of the  finite field under discussion.  
We use $\cardinality{S}$ to denote the cardinality of a set $S$.
% and $\bar z$, for the complex conjugate.
We call a function from $\F_{p^n}$ (or $\F_p^n$) to $\F_p$  a {\em $p$-ary  function} on $n$ variables. For positive integers $n$ and $m$, any map $F:\F_{p^n}\to\F_{p^m}$ (or, $\F_p^n\to\F_p^m$)  is called a {\em vectorial $p$-ary  function}, or {\em $(n,m)$-function}. 
When $m=n$, $F$ can be uniquely represented as a univariate polynomial over $\F_{p^n}$ (using some identification, via a basis, of the finite field with the vector space) of the form
$
F(x)=\sum_{i=0}^{p^n-1} a_i x^i,\ a_i\in\F_{p^n},
$
whose {\em algebraic degree}   is then the largest Hamming weight of the exponents $i$ with $a_i\neq 0$. To (somewhat) distinguish between the vectorial and single-component output, we shall use upper/lower case to denote the functions.
 
 As customary, we will use the following concepts:
 \begin{enumerate}
 \item 
 The absolute trace $\Trn$ and the relative trace $\Tr_{\F_{p^n}/\F_{p^m}}$, defined as $\Tr_{\F_{p^n}/\F_{p^m}} (x)=\sum_{i=0}^{\frac{n}{m}-1} x^{p^{mi}}$.
 \item 
Given a $p$-ary  function $f:\F_{p^n}\to\F_{p^n}$, the derivative of $f$ with respect to~$a \in \F_{p^n}$ is the $p$-ary  function
$
 D_{a}f(x) =  f(x + a)- f(x), \mbox{ for  all }  x \in \F_{p^n}.
$ 
\item
For an $(n,m)$-function  
$F$, and $a\in\F_{p^n},b\in\F_{p^m}$, we let $\Delta_F(a,b)=\cardinality{\{x\in\F_{p^n} : F(x+a)-F(x)=b\}}$. We call the quantity
$\delta_F=\max\{\Delta_F(a,b)\,:\, a,b\in \F_{p^n}, a\neq 0 \}$ the {\em differential uniformity} of $F$. If $\delta_F= \delta$, then we say that $F$ is differentially $\delta$-uniform. If $m=n$ and $\delta=1$, then $F$ is called a {\em perfect nonlinear} ({\em PN}) function, or {\em planar} function. If  $m=n$ and $\delta=2$, then $F$ is called an {\em almost perfect nonlinear} ({\em APN}) function. It is well known that PN functions do not exist if $p=2$. 

\item
 As defined in~\cite{EFRST20}, for an $(n,m)$-function $F$, $a\in\F_{p^n}$ 
 and $c\in\F_{p^m}$, the ({\em multiplicative}) {\em $c$-derivative} of $F$ with respect to~$a \in \F_{p^n}$ is the  function
\[
 _cD_{a}F(x) =  F(x + a)- cF(x), \mbox{ for  all }  x \in \F_{p^n}.
\]
We let the entries of the $c$-Difference Distribution Table ($c$-DDT) be defined by ${{_c}\Delta}_F(a,b)=\cardinality{\{x\in\F_{p^n} : F(x+a)-cF(x)=b\}}$. We call the quantity
\[
{_c}\Delta_F=\max\left\{{{_c}\Delta}_F(a,b)\,|\, a\in \F_{p^n}, b\in\F_{p^m} \text{ and } a\neq 0 \text{ if $c=1$} \right\}\]
the {\em $c$-differential uniformity} of~$F$.  

\item 
 If $\delta_{F,c}=\delta$, then we say that $F$ is differentially $(c,\delta)$-uniform (or that $F$ has $c$-uniformity $\delta$, or that %for short, 
{\em $F$ has $\delta$-uniform $c$-DDT}). If $\delta=1$, then $F$ is called a {\em perfect $c$-nonlinear} ({\em PcN}) function (certainly, for $c=1$, they only exist for odd characteristic $p$; however, as shown in~\cite{EFRST20}, there exist PcN functions for $p=2$, for all  $c\neq1$). If $\delta=2$, then $F$ is called an {\em almost perfect $c$-nonlinear} ({\em APcN}) function. 
When we need to specify the constant $c$ for which the function is PcN or APcN, then we may use the notation  $c$-PN, or $c$-APN.
Note that if $F$ is an $(n,n)$-function,   then $F$ is $c$-PN if and only if $_cD_a F$ is a permutation polynomial.

\item
As in \cite{YZ22}, let $F:\F_{p^n}\to\F_{p^n}$ be a power function $F(x)=x^d$ with $c$-differential uniformity ${_c}\Delta_F$. Denote by ${_c}\omega_i=\cardinality\{b\in\F_{p^n}: {_c}\delta_F(1,b)=i\}$.  We define the {\em $c$-differential spectrum}  of $F$ as the multiset $\mathbb{S}=\{{_c}\omega_i: 0\leq i\leq {_c}\Delta_F \mbox{ and } {_c}\omega_i>0\}$.

NB: Note that, for $a=0$, for a power function $F(x)=x^d$, where $k=\gcd(d,p^n-1)$, the equation $F(x+a)-cF(x)=b$ becomes $(1-c)x^d=b$. If $k=1$, the equation has a unique solution. If $k>1$, and $b/(1-c)=g^s$, for some $s$, where $g$ is a primitive element of the field, the equation has solutions (namely $k$ of them) if and only if $k\,|\,s$. In the computation of the differential or $c$-differential uniformity for a power frunction,  for $a\neq0$,  it is customary to replace  $a$ by $1$, since an equation of the form $(x+a)^d-cx^d=b$ is equivalent to $(y+1)^d-c'y^d=b'$, where $y=x/a$, $b'=b/a^d, c'=c/a^d$.

	In \cite[Theorem 4]{YZ22}, it was proved  that for  any monomial function $x^d$, for $c\neq 1$, 
	\[	\sum_{i=0}^{{}_c{\Delta _F}}{}_c\omega_i=\sum_{i=0}^{{}_c{\Delta _F}}(i\cdot{}_c\omega_i)=p^n\]
	and
	 $$\sum_{i=0}^{_c\Delta_F}(i^2\cdot _c\omega_i)=\frac{_cN_4-1}{p^n-1}-\gcd(d,p^n-1),$$
where
 \begin{equation*}\label{equationsystem}
	{}_cN_{4}=\# \left\{ {\left( {{x_1},{x_2},{x_3},{x_4}} \right) \in (\F_{q})^4 : \begin{cases}
				{x_1} - {x_2} + {x_3} - {x_4} = 0\\
				x_1^d - cx_2^d + cx_3^d - x_4^d = 0
	\end{cases} }  \right\}. 
\end{equation*} 
This gives the relationship between the $c$-differential spectrum and the number of solutions of a certain system of equations.

\item
We will denote by $\eta(\alpha)$ the quadratic character of $\alpha$ (that is, $\eta(\alpha)=0$ if $\alpha=0$, $\eta(\alpha)=1$ if $0\neq \alpha$ is a square, $\eta(\alpha)=-1$ if $\alpha$ is not a square).
\end{enumerate}

In this paper we are computing precisely the $c$-differential spectrum of the function $F(x)=x^{\frac{p^n+1}{2}}$ over $\F_{p^n}$. Since some of the proofs are rather long, we will split the analysis in three results, treating the cases $c=0$, $c=-1$ and $c\not= 0,\pm 1$, separately. We also provide a result for the particular case $c^2=-1$. 
Recall that the spectrum  (and differential uniformity) for $c=1$ is known for this map~\cite{CHNC13}. As a byproduct, several character sums are explicitly found.

\section{The spectrum of $x^{\frac{p^n +1}{2}}$, $p$ odd prime}

Since the case of  $c=1$ was treated in~\cite{CHNC13}, for this function,  we assume below that $c\neq 1$.
We include here the $c$-differential uniformity for this function, and we give in the proof a complete characterization of the conditions on $b$ to have any number of solutions for the equation $(x+1)^{\frac{p^n +1}{2}}-cx^{\frac{p^n +1}{2}}=b$, where $c$ varies through the field. We note that this case is not included in~\cite{YZ22}, who treated the $(-1)$-spectrum for the function $x^{\frac{p^k +1}{2}}$, under the assumption $\gcd(n,k)=1$ and $\frac{2n}{\gcd(2n,k)}$ being even.

We will use the following result (see \cite{CHNC13}) about the cyclotomic numbers: Let $S_{i,j}=\{x\neq0,-1: \eta(x+1)=i,\eta(x)=j\}$, where $i,j\in\{\pm1\}$. Then, if $\eta(-1)=1$ (which happens if $p^n\equiv1 \mod 4$), then $|S_{1,1}|=\frac{p^n-5}{4}$ and $|S_{-1,1}|=|S_{1,-1}|=|S_{-1,-1}|=\frac{p^n-1}{4}$. If $\eta(-1)=-1$ (which happens if $p^n\equiv3\mod 4$), then $|S_{1,-1}|=\frac{p^n+1}{4}$ and $|S_{1,1}|=|S_{-1,1}|=|S_{-1,-1}|=\frac{p^n-3}{4}$. %These cardinalities are known as the {\em cyclotomic numbers}.

We start with the easy case of $c=0$.
\begin{theorem} 
Let $F(x)=x^d$, where $d=\frac{p^n +1}{2}$ (for $p$ odd).  If $c=0$, then the function is PcN if $\eta(-1)=1$, and APcN if $\eta(-1)=-1$, with spectrum $\left\{_c{\omega}_0=\frac{p^n-1}{2},_c{\omega}_1=1,_c{\omega}_2=\frac{p^n-1}{2}\right\}$. 
\end{theorem}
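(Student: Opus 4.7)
The plan is to substitute $y = x+1$ in the equation $(x+1)^d = b$ (which is what ${}_cD_1 F(x) = b$ reduces to when $c=0$), turning the problem into counting preimages of the power map $y \mapsto y^d$ on $\F_{p^n}$. Since $d = \frac{p^n+1}{2}$ and $y^{(p^n-1)/2} = \eta(y)$ for $y \in \F_{p^n}^*$, the key factorization is $y^d = y \cdot \eta(y)$; hence the map is the identity on nonzero squares, the negation $y \mapsto -y$ on nonsquares, and sends $0$ to $0$.

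Next, I would split on the sign of $\eta(-1)$. When $\eta(-1) = 1$ (equivalently, $p^n \equiv 1 \pmod 4$), multiplication by $-1$ permutes the nonsquares, so the identity branch and the negation branch together produce a permutation of $\F_{p^n}$; every $b$ has a unique preimage and $F$ is PcN. When $\eta(-1) = -1$ (so $p^n \equiv 3 \pmod 4$), multiplication by $-1$ swaps the sets of nonzero squares and nonsquares, so the image of the nonsquare branch lies inside the nonzero squares. Combining the two branches: $b=0$ has the single preimage $y=0$; each nonzero square is hit twice (by itself through the identity branch, and by the unique nonsquare whose negative equals it); each nonsquare is missed entirely. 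Counting yields ${}_c\omega_0 = \frac{p^n-1}{2}$, ${}_c\omega_1 = 1$, ${}_c\omega_2 = \frac{p^n-1}{2}$, and $F$ is APcN.

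There is no substantive obstacle here: the whole argument rests on the factorization $y^d = y \cdot \eta(y)$ and on how multiplication by $-1$ acts on the two cosets of $(\F_{p^n}^*)^2$ in $\F_{p^n}^*$, which is controlled entirely by $\eta(-1)$. As a consistency check, one sees that both sum conditions $\sum {}_c\omega_i = p^n$ and $\sum i \cdot {}_c\omega_i = p^n$ of \cite{YZ22} are satisfied in the stated spectrum.
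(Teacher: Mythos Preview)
Your proof is correct and takes essentially the same approach as the paper: both reduce to analyzing the map $y\mapsto y\cdot\eta(y)$ (the paper writes it as $\eta(x+1)(x+1)=b$ without the explicit substitution $y=x+1$) and split according to $\eta(-1)$. Your framing in terms of how negation acts on the cosets of $(\F_{p^n}^*)^2$ is a slightly more conceptual packaging of the same case analysis the paper performs.
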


\begin{proof} Let $d=\frac{p^n +1}{2}=\frac{p^n-1}{2}+1$.  We look  at the equation
$$F(x+1)-cF(x)=b,$$
and so,
\begin{align}
(x+1)^{\frac{p^n +1}{2}}-cx^{\frac{p^n +1}{2}} %\nonumber \\
&=(x+1)^{\frac{p^n -1}{2}+1}-cx^{\frac{p^n -1}{2}+1}\nonumber \\
&=\eta(x+1)(x+1)-c\eta(x)x=b.\label{eq:main}
\end{align}

\noindent
When  $c=0$, Equation~\eqref{eq:main} becomes  $\eta(x+1)(x+1)=b$.

\begin{itemize}
\item If $x=-1$, then, $b=0$.
\item Let $\eta(x+1)=1$. Then, the equation becomes $x+1=b$, so $x=b-1$. This solution is valid if and only if $\eta(b)=1$.
\item Let $\eta(x+1)=-1$. Then, the equation becomes $-x-1=b$, so $x=-b-1$. This solution is valid if and only if $\eta(-b)=-1$.
\end{itemize}
Therefore, if $\eta(-1)=1$, then we have a unique solution for every $b$. The function is then 0-PN (that is, a permutation), so its spectrum is, naturally,
\begin{equation}
\label{eq:spectrum1}
\{{_0}\omega_1=p^n\}.
\end{equation}
If $\eta(-1)=-1$, then we have a single solution for $b=0$, two solutions for $b$ such that $\eta(b)=1$, and no solutions otherwise. The function is then 0-APN, with spectrum 
\begin{equation}
\label{eq:spectrum2}
\left\{_0{\omega}_0=\frac{p^n-1}{2},\,_0{\omega}_1=1,\,_0{\omega}_2=\frac{p^n-1}{2}\right\}.
\end{equation}
The claim is therefore shown.
\end{proof}

We continue with the more complicated case of $c=-1$.
\begin{theorem} 
Let $F(x)=x^d$, where $d=\frac{p^n +1}{2}$ (for $p$ odd), over $\F_{p^n}, n\geq 1$. If $c=-1$, then the spectrum is $\left\{_c{\omega_0}=\frac{p^n-1}{2},\,_c{\omega_1}=\frac{p^n-3}{2},\,_c{\omega_{\frac{p^n+3}{4}}}=2\right\}$ if $\eta(-1)=1$, and $\left\{_c{\omega_0}=\frac{3p^n-5}{4},\,_c{\omega_2}=\frac{p^n-3}{4},\,_c{\omega_{\frac{p^n+1}{4}}}=1,\,_c{\omega_{\frac{p^n+5}{4}}}=1\right\}$, if $\eta(-1)=-1$.  
\end{theorem}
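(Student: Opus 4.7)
The plan is to specialize equation~\eqref{eq:main} to $c=-1$, yielding
\[
\eta(x+1)(x+1)+\eta(x)x=b,
\]
and to enumerate solutions by splitting on the signs of $\eta(x+1)$ and $\eta(x)$. The elements $x=0$ and $x=-1$, where the quadratic character vanishes, must be handled separately: $x=0$ outputs $b=1$, while $x=-1$ outputs $b=-\eta(-1)$ (since $F(-1)=(-1)^{(p^n+1)/2}=\eta(-1)$). For $x\notin\{0,-1\}$, the four sign patterns split as follows. The \emph{mixed} subcases collapse completely: $(\eta(x+1),\eta(x))=(1,-1)$ forces $b=1$ identically, so every $x\in S_{1,-1}$ sits above $b=1$; symmetrically, $(-1,1)$ places every $x\in S_{-1,1}$ above $b=-1$. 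The \emph{pure} subcases yield single candidate solutions: $(1,1)$ forces $x=(b-1)/2$, valid iff $\eta((b-1)/2)=\eta((b+1)/2)=1$, while $(-1,-1)$ forces $x=-(b+1)/2$, valid iff $\eta(-(b-1)/2)=\eta(-(b+1)/2)=-1$; in either pure subcase the exclusions $b\neq\pm 1$ are automatic from the $\eta$-conditions.

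Next I branch on $\eta(-1)$. When $\eta(-1)=1$, the $(-1,-1)$ conditions reduce to $\eta((b\pm 1)/2)=-1$, which are mutually exclusive with those of $(1,1)$; so each $b\neq\pm 1$ produces at most one solution. Via the substitution $y=(b-1)/2$, the number of $b$'s yielding a $(1,1)$-solution equals $|S_{1,1}|$, and likewise $|S_{-1,-1}|$ for $(-1,-1)$. Above $b=1$ the contribution is $1+|S_{1,-1}|$ (from $x=0$ and the mixed subcase), and above $b=-1$ it is $1+|S_{-1,1}|$ (from $x=-1$ and the mixed subcase). Substituting the cyclotomic values quoted before the theorem gives $_c\omega_{(p^n+3)/4}=2$, $_c\omega_1=(p^n-3)/2$, and then $_c\omega_0=(p^n-1)/2$ by completion to $p^n$.

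When $\eta(-1)=-1$, the sign flip in the $(-1,-1)$ analysis makes its conditions equivalent to those of $(1,1)$; hence each $b\neq\pm 1$ satisfying $\eta((b-1)/2)=\eta((b+1)/2)=1$ receives \emph{two} distinct solutions $(b-1)/2$ and $-(b+1)/2$. These coincide only at $b=0$, where the conditions necessarily fail (they would force $\eta(1/2)$ to equal both $1$ and $-1$, since $\eta(-1/2)=-\eta(1/2)$), so distinctness is automatic. Furthermore $x=-1$ now falls above $b=1$ instead of $b=-1$; hence $b=1$ receives $2+|S_{1,-1}|$ solutions from $\{0,-1\}\cup S_{1,-1}$, while $b=-1$ receives only $|S_{-1,1}|$. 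Inserting the cyclotomic values recalled in the preamble yields $_c\omega_{(p^n+5)/4}=1$, $_c\omega_{(p^n+1)/4}=1$, $_c\omega_2=|S_{1,1}|=(p^n-3)/4$, and $_c\omega_0=(3p^n-5)/4$ by completion.

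The main obstacle is the careful bookkeeping at the boundary: the output fibers above $b=\pm 1$ interact nontrivially with the singular $x$-values $0$ and $-1$, and the sign $\eta(-1)$ controls which singular $x$ feeds which $b$. The asymmetry between the heights $(p^n+1)/4$ and $(p^n+5)/4$ in the second case arises precisely because both singular $x$-values funnel into $b=1$ when $\eta(-1)=-1$, while only one does when $\eta(-1)=1$. The identities $\sum_i{}_c\omega_i=\sum_i i\cdot{}_c\omega_i=p^n$ from~\cite[Theorem 4]{YZ22} provide a useful sanity check on both branches.
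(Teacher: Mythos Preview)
Your argument is correct and follows essentially the same route as the paper: split on the signs of $\eta(x)$ and $\eta(x+1)$, treat the singular points $x=0,-1$ separately, note that the mixed sign classes collapse to the fibers over $b=\pm1$, and count the remaining fibers via cyclotomic numbers. The one genuine difference is that for $b\neq\pm1$ the paper introduces the auxiliary sets $T_{i,j}=\{b:\eta(b-1)=i,\eta(b+1)=j\}$ and evaluates them with an explicit Jacobsthal sum, whereas you substitute $y=(b-1)/2$ and read the counts directly from the already-quoted $|S_{i,j}|$; your shortcut is a bit cleaner and avoids the second character-sum computation.

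Two small caveats. First, your parenthetical ``$F(-1)=(-1)^{(p^n+1)/2}=\eta(-1)$'' has a sign slip: in fact $(-1)^{(p^n+1)/2}=-\eta(-1)$, and since $b=F(0)+F(-1)=F(-1)$ you still get $b=-\eta(-1)$, so your conclusion is unaffected. Second, be careful when you ``insert the cyclotomic values recalled in the preamble'': for $\eta(-1)=-1$ the display just before Theorem~2.1 has $|S_{1,-1}|$ and $|S_{-1,1}|$ interchanged (the correct values, used inside the paper's proof, are $|S_{1,-1}|=(p^n-3)/4$ and $|S_{-1,1}|=(p^n+1)/4$). With the correct values your counts $2+|S_{1,-1}|=(p^n+5)/4$ and $|S_{-1,1}|=(p^n+1)/4$ match the statement; with the preamble's typo they would not.
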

\begin{proof}
When $c=-1$, Equation~\eqref{eq:main} becomes 
 \begin{equation}
 \label{eq:c-1}
 \eta(x+1)(x+1)+\eta(x)x=b.
 \end{equation}
Assuming $\eta(-1)=1$, we get the solutions:
\begin{itemize}
\item for $b=1$, $x=0$ and all elements $x\in\F_{p^n}$ such that $\eta(x+1)=1=-\eta(x)$ (that is, $|S_{1,-1}|=\frac{p^n-2+\eta(-1)}{4}=\frac{p^n-1}{4}$); 
\item for $b=-1$, $x=-1$ and all elements $x\in\F_{p^n}$ such that $\eta(x+1)=-1=-\eta(x)$ (that is, $|S_{-1,1}|=\frac{p^n-\eta(-1)}{4}=\frac{p^n-1}{4}$); 
\item one solution   if $\eta(b+1)=\eta(b-1)$, and no solution otherwise. 
\end{itemize}
We get then, for $b=\pm1$, $\frac{p^n+3}{4}$ solutions for~\eqref{eq:c-1}.

The spectrum is therefore
\allowdisplaybreaks
\begin{align*}
_c{\omega_0}&=|\{b\neq\pm1: \eta(b+1)\neq\eta(b-1)\}|\\
_c{\omega_1}&=|\{b\neq\pm1: \eta(b+1)=\eta(b-1)\}|\\
_c{\omega_{\frac{p^n+3}{4}}}&=|\{b=\pm1\}|=2.
\end{align*}

The $c$-differential uniformity is then $\frac{p^n+3}{4}$, achieved only in two points. 
We want to compute the cardinality of $T_{i,j}=\{b\neq\pm1: \eta(b-1)=i,\eta(b+1)=j\}$, $i,j\in\{\pm 1\}$. We have that
$$\sum_{b\in\F_{p^n}}(1+i\eta(b-1))(1+j\eta(b+1))=4|T_{i,j}|+2+j\eta(2)+i\eta(-2).$$
On the other hand,
$$\sum_{b\in\F_{p^n}}(1+i\eta(b-1))(1+j\eta(b+1))=p^n+ij\sum_{b\in\F_{p^n}}\eta((b-1)(b+1)).$$

We know \cite{LN97} that if $f(x)=a_2 x^2+a_1 x+a_0$ is a polynomial in a finite field $\F_q$ of odd characteristic, $a_2\neq 0, d=a_1^2-4a_0a_2$, and $\eta$ is the quadratic character on $\F_q$, then the Jacobsthal sum of $f$ is
 \begin{equation}
 \label{eq:Jacobsthal}
 \sum_{x\in\F_q} \eta(f(x))=\begin{cases}
 -\eta(a_2)&\text{ if }  d\neq 0\\
 (q-1)\eta(a_2)&\text{ if }  d= 0.
 \end{cases}
 \end{equation}
 To compute $\sum_{b\in\F_{p^n}}\eta((b-1)(b+1))$, we apply~\eqref{eq:Jacobsthal} and
 take $f(x)=x^2-1$, $d=1$, and so, the Jacobsthal sum becomes
 $ \sum_{x\in\F_{p^n}} \eta(x^2-1)=-\eta(1)=-1$. Therefore
\[
\sum_{b\in\F_{p^n}}(1+i\eta(b-1))(1+j\eta(b+1))=p^n-ij, 
\]
so 
\[
|T_{i,j}|=\frac{1}{4}\left(p^n-ij-2-j\eta(2)-i\eta(-2)\right).
\]

If $\eta(-1)=1$, then $|T_{1,1}|=\frac{1}{4}(p^n-3-2\eta(2))$, $|T_{1,-1}|=\frac{1}{4}(p^n-1)=|T_{-1,1}|$, and $|T_{-1,-1}|=\frac{1}{4}(p^n-3+2\eta(2))$ (%we can find $\eta(2)$, though, not needed, as we can see below, since we know
recall that $2$ is a quadratic residue if and only if $p \equiv \pm 1 \pmod 8$, or,  $p\equiv \pm 3\pmod 8$ and $n$ even).

%Here we will use that $\sum_{i=0}^{_c{\delta_F}}i {_c{\omega_i}}$. Then $$_c{\omega_1}+2\frac{p^n+3}{4}=p^n$$ This implies that $_c{\omega_1}=\frac{p^n-3}{2}$ (note that this can also be computed in a similar way as the cyclotomic numbers). The spectrum is then
Thus, 
\begin{align*}
_c{\omega_0}&=|\{b\neq\pm1: \eta(b+1)\neq\eta(b-1)\}|=|T_{1,-1}|+|T_{-1,1}|=\frac{p^n-1}{2},\\
_c{\omega_1}&=|\{b\neq\pm1: \eta(b+1)=\eta(b-1)\}|=|T_{1,1}|+|T_{-1,-1}|=\frac{p^n-3}{2},
\end{align*}
and thus, the spectrum becomes
\begin{equation}
\label{eq:spectrum3}
\left\{_c{\omega_0}=\frac{p^n-1}{2},\,_c{\omega_1}=\frac{p^n-3}{2},\,_c{\omega_{\frac{p^n+3}{4}}}=2\right\}.
\end{equation}

For  $\eta(-1)=-1$, we get, for $b=1$, $x=0$, $x=-1$, and all elements $x\in\F_{p^n}$ such that $\eta(x+1)=1=-\eta(x)$ (that is, $|S_{1,-1}|=\frac{p^n-2+\eta(-1)}{4}=\frac{p^n-3}{4}$); for $b=-1$, all elements $x\in\F_{p^n}$ such that $\eta(x+1)=-1=-\eta(x)$ (that is, $|S_{-1,1}|=\frac{p^n-\eta(-1)}{4}=\frac{p^n+1}{4}$); two solutions if $\eta(b+1)=\eta(b-1)=\eta(2)$, and no solution, otherwise. 
The spectrum is then:
\allowdisplaybreaks
\begin{align*}
_c{\omega_2}&=|\{b\neq\pm1: \eta(b+1)=\eta(b-1)=\eta(2)\}|\\
_c{\omega_0}&=|\{b\neq\pm1: b\notin c_{\omega_2}\}|\\
_c{\omega_{\frac{p^n+5}{4}}}&=|\{b=1\}|=1\\
_c{\omega_{\frac{p^n+1}{4}}}&=|\{b=-1\}|=1.
\end{align*}

If $\eta(-1)=-1$, then $|T_{1,1}|=\frac{1}{4}(p^n-3)=|T_{-1,-1}|$, $|T_{1,-1}|=\frac{1}{4}(p^n-1+2\eta(2))$ and $|T_{1,1}|=\frac{1}{4}(p^n-1-2\eta(2))$. Then:
\begin{align*}
_c{\omega_2}&=|\{b\neq\pm1: \eta(b+1)=\eta(b-1)=\eta(2)\}|=\frac{1}{4}(p^n-3),\\
_c{\omega_0}&=|\{b\neq\pm1: b\notin c_{\omega_2}\}|=p^n-2-\frac{1}{4}(p^n-3)=\frac{3p^n-5}{4}.
\end{align*}

The $c$-differential uniformity is then $\frac{p^n+5}{4}$, achieved only in one point, and the spectrum is then
%Here we will use that $\sum_{i=0}^{_c{\delta_F}}i {_c{\omega_i}}$. Then $$2_c{\omega_2}+\frac{p^n+5}{4}+\frac{p^n+1}{4}=p^n$$ This implies that $_c{\omega_2}=\frac{p^n-3}{4}$. The spectrum is then
\begin{equation}
\label{eq:spectrum4}
\left\{_c{\omega_0}=\frac{3p^n-5}{4},\,_c{\omega_2}=\frac{p^n-3}{4},\,_c{\omega_{\frac{p^n+1}{4}}}=1,\,_c{\omega_{\frac{p^n+5}{4}}}=1\right\}.
\end{equation}
The theorem is therefore shown.
\end{proof}

 Using our above counts and \cite[Theorem 4]{YZ22}, we have the following corollary:
\begin{corollary} The number of solutions $\left( {{x_1},{x_2},{x_3},{x_4}} \right) \in (\F_{p^n})^4$ of the following system of equations
	\begin{equation*}
		\begin{cases}
			{x_1} - {x_2} + {x_3} - {x_4} = 0\\
			x_1^d + x_2^d - x_3^d - x_4^d = 0
		\end{cases}
	\end{equation*}
is $\frac{1}{8}(p^{3n}+9p^{2n}-5p^n+3)$ when $\eta(-1)=1$, and is $\frac{1}{8}(p^{3n}+13p^{2n}-9p^n+3)$ when $\eta(-1)=-1$, where $d=\frac{p^n+1}{2}$.
\end{corollary}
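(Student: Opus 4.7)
The system of equations in the corollary is precisely ${}_cN_4$ from the excerpt specialized to $c=-1$, since setting $c=-1$ turns the coefficients $(1,-c,c,-1)=(1,1,-1,-1)$. Thus the plan is to apply the identity
$$\sum_{i=0}^{{}_c\Delta_F} i^2 \cdot {}_c\omega_i = \frac{{}_cN_4 - 1}{p^n - 1} - \gcd(d, p^n - 1)$$
from \cite[Theorem 4]{YZ22} with $c=-1$, using the two spectra just computed in the preceding theorem, and solve for ${}_cN_4$.

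First, I would pin down the value of $\gcd(d, p^n-1)$ where $d=\frac{p^n+1}{2}$. Since $2d - (p^n-1) = 2$, the gcd divides $2$; moreover $d$ is odd exactly when $p^n\equiv 1\pmod 4$. Hence $\gcd(d,p^n-1)=1$ when $\eta(-1)=1$, and $\gcd(d,p^n-1)=2$ when $\eta(-1)=-1$.

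Next I would compute the second moment $\sum i^2 \cdot {}_{-1}\omega_i$ in each case. When $\eta(-1)=1$, plugging the spectrum $\{{}_c\omega_0,{}_c\omega_1,{}_c\omega_{(p^n+3)/4}\}=\{\frac{p^n-1}{2},\frac{p^n-3}{2},2\}$ gives
$$\sum i^2 \cdot {}_{-1}\omega_i = \frac{p^n-3}{2} + 2\cdot\frac{(p^n+3)^2}{16} = \frac{p^{2n}+10p^n-3}{8}.$$
When $\eta(-1)=-1$, plugging $\{\frac{3p^n-5}{4},\frac{p^n-3}{4},1,1\}$ in multiplicities $0,2,\frac{p^n+1}{4},\frac{p^n+5}{4}$ yields
$$\sum i^2 \cdot {}_{-1}\omega_i = (p^n-3) + \frac{(p^n+1)^2+(p^n+5)^2}{16} = \frac{p^{2n}+14p^n-11}{8}.$$

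Finally, in each case I would rearrange the identity from \cite[Theorem 4]{YZ22} to isolate ${}_{-1}N_4$, using the gcd values above. In the $\eta(-1)=1$ case this gives
$$ {}_{-1}N_4 = 1 + (p^n-1)\cdot\frac{p^{2n}+10p^n+5}{8} = \frac{p^{3n}+9p^{2n}-5p^n+3}{8},$$
and in the $\eta(-1)=-1$ case,
$$ {}_{-1}N_4 = 1 + (p^n-1)\cdot\frac{p^{2n}+14p^n+5}{8} = \frac{p^{3n}+13p^{2n}-9p^n+3}{8},$$
matching the two claimed formulas. There is no real obstacle here; the computation is purely mechanical once the gcd is correctly identified in both parity cases, which is the only subtle point worth flagging explicitly.
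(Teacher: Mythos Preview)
Your proposal is correct and follows exactly the approach the paper indicates: it applies \cite[Theorem~4]{YZ22} with $c=-1$, plugs in the $(-1)$-differential spectra from the preceding theorem, and computes $\gcd(d,p^n-1)$ according to the parity of $p^n\bmod 4$. The paper gives no further detail beyond the sentence ``Using our above counts and \cite[Theorem~4]{YZ22}'', so your mechanical verification is precisely what is intended.
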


We continue with the $c\neq 0, \pm1$ cases under $\eta(-1)=1$ (that is, $p\equiv 1\pmod 4$, or $p\equiv 3\pmod 4$ and $n$ even). For $c\in\F_{p^n}$ fixed, we will need the following elliptic curve $E(\F_{p^n}):\ y^2=x(x-1)(x-c^2)$ over the   field $\F_{p^n}$. 
By the Hasse-Weil inequality~\cite[Corollary 4.15]{Wash08}, we know that $|\#E(\F_{p^n})-(p^n+1)|\leq 2\sqrt{p^n}$. We let $a_{p^n}^{(c)}=\#E(\F_{p^n})-(p^n+1)$ denote the Frobenius trace (as it can be interpreted as the trace of the Frobenius endomorphism of the elliptic curve).
\begin{theorem} 
\label{thm:eta=1}
Let $F(x)=x^d$, where $d=\frac{p^n +1}{2}$, over $\F_{p^n}, n\geq 1$, $p$ odd prime. If $c\neq0,\pm1$ and $\eta(-1)=1$, then  the $c$-differential spectrum of $F$ is the following$:$
\begin{itemize}
\item[$(i)$]
If  $\eta(1-c^2)=1$$:$
\allowdisplaybreaks
\begin{align*}
_c{\omega_0}&=\frac{1}{8}\left(p^n+ a_{p^n}^{(c)}+5\right),\\
_c{\omega_1}&=\frac{1}{4}\left(3p^n-  a_{p^n}^{(c)}-5\right),\\
_c{\omega_2}&=\frac{1}{8}\left(p^n+ a_{p^n}^{(c)}+5\right).
\end{align*}

Note that this implies that the function is APcN for those values of $c$.

\item[$(ii)$]
If  $\eta(1-c)=1,\eta(1+c)=-1$$:$
\allowdisplaybreaks
\begin{align*}
_c{\omega_1}
&=\frac{1}{4}\left(p^n+ a_{p^n}^{(c)}+1\right)+\left\{\begin{array}{ll}
2,&\eta(2)=1,\eta(c)=1\\
1,&\eta(c)=-1\\
0,&\eta(2)=-1,\eta(c)=1,
\end{array}\right.\\
_c{\omega_2}&=\frac{1}{4} \left(p^n- a_{p^n}^{(c)}-7\right),\\
_c{\omega_3}&=\left\{\begin{array}{ll}
1,&\eta(c)=-1\\
0,&\eta(c)=1,
\end{array}\right.\\
_c{\omega_4}& =\frac{1}{16}\left(p^n+ a_{p^n}^{(c)}-3-4\eta(2)(1+\eta(c))\right),\\ 
_c{\omega_0}&=p^n-(_c{\omega_1}+_c{\omega_2}+_c{\omega_3}+_c{\omega_4}).
\end{align*}

\item[$(iii)$]
If  $\eta(1-c)=-1,\eta(1+c)=1$$:$
\allowdisplaybreaks
\begin{align*}
_c{\omega_1}&=\frac{1}{4}\left(p^n+  a_{p^n}^{(c)}+1\right)+\left\{\begin{array}{ll}
0,&\eta(2)=1,\eta(c)=1\\
1,&\eta(c)=-1\\
2,&\eta(2)=-1,\eta(c)=1,
\end{array}\right.\\
_c{\omega_2}&=\frac{1}{4}\left(p^n- a_{p^n}^{(c)}-7\right),\\
_c{\omega_3}&=\left\{\begin{array}{ll}
2,&\eta(2)=1,\eta(c)=1\\
1,&\eta(c)=-1\\
0,&\eta(2)=-1,\eta(c)=1,
\end{array}\right.\\
_c{\omega_4}&=\frac{1}{16}\left(p^n+ a_{p^n}^{(c)}-3+4\eta(2)(1+\eta(c))\right),\\
_c{\omega_0}&=p^n-\left(_c{\omega_1}+\,_c{\omega_2}+\,_c{\omega_3}+\,_c{\omega_4}\right).
\end{align*}
\end{itemize}

\end{theorem}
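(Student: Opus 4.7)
\medskip
\noindent\textbf{Proof plan for Theorem~\ref{thm:eta=1}.} The starting point is Equation~\eqref{eq:main}: since $x^{(p^n+1)/2}=\eta(x)\,x$ for $x\ne 0$, the equation to analyse is $\eta(x+1)(x+1)-c\,\eta(x)\,x=b$. I would split the count of solutions for a fixed $b$ into the four linear sub-equations obtained by fixing $(\epsilon_1,\epsilon_2)=(\eta(x+1),\eta(x))\in\{\pm 1\}^2$: each sub-equation $\epsilon_1(x+1)-c\epsilon_2 x=b$ has a unique formal solution $x_\epsilon(b)\in\F_{p^n}$ (since $c\ne\pm 1$), which must then be checked for consistency with the sign assumption. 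The boundary cases $x=0$ and $x=-1$ are handled separately; under $\eta(-1)=1$ they contribute a solution exactly when $b=1$ and when $b=c$, respectively. A short computation writes $\eta(x_\epsilon)$ and $\eta(x_\epsilon+1)$ as products of $\eta(1\pm c)$ with values from $\{\eta(b\pm 1),\eta(b\pm c)\}$, so for $b\notin\{\pm 1,\pm c\}$ the number of valid solutions depends only on the pattern $\bigl(\eta(b-1),\eta(b+1),\eta(b-c),\eta(b+c)\bigr)$ together with the fixed signs $\eta(1\pm c)$.

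To convert the pattern count into spectrum multiplicities, I would expand each desired pattern via the indicator $\tfrac12(1+\epsilon\eta(\cdot))$, which reduces everything to character sums $\sum_b\eta(L)$, $\sum_b\eta(L_1L_2)$, $\sum_b\eta(L_1L_2L_3)$, and the fourfold sum $\sum_b\eta\bigl((b-1)(b+1)(b-c)(b+c)\bigr)$, where the $L_i$ range over $\{b\pm 1,\ b\pm c\}$. The single-linear sums vanish, the pairwise sums are computed by~\eqref{eq:Jacobsthal}, and the triple sums reduce to Jacobsthal sums after an affine substitution. The pivotal fourfold sum equals $\sum_b\eta\bigl((b^2-1)(b^2-c^2)\bigr)$, and the substitution $u=b^2$ writes it as a linear combination of $\sum_u\eta\bigl((u-1)(u-c^2)\bigr)$ (again Jacobsthal) and $\sum_u\eta\bigl(u(u-1)(u-c^2)\bigr)$. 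The latter is, by the convention stated just before the theorem, precisely $a_{p^n}^{(c)}$, the Frobenius trace of $E:y^2=u(u-1)(u-c^2)$; this is the sole mechanism by which $a_{p^n}^{(c)}$ enters the final formulas.

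Collecting these character sums gives the number of $b\notin\{\pm 1,\pm c\}$ in each of the $2^4=16$ sign patterns, hence the generic contribution to each $_c\omega_i$. One then adds the four exceptional points: at them some factors $\eta(b\pm 1),\eta(b\pm c)$ vanish and the surviving ones are expressible through $\eta(2)$, $\eta(c)$ and $\eta(1\pm c)$, which is exactly how these extra characters enter the statement. The trichotomy (i)/(ii)/(iii) is forced by whether $\eta(1-c)\eta(1+c)=\eta(1-c^2)$ equals $+1$ (parallel branches dominate, yielding at most $2$ solutions generically, and the function is APcN) or $-1$ (mixed behaviour producing up to $4$ solutions), and the $\eta(2),\eta(c)$ subcases in (ii),(iii) come from the boundary bookkeeping at $b\in\{\pm 1,\pm c\}$. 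Finally, $_c\omega_0$ is recovered from $\sum_i{}_c\omega_i=p^n$ of~\cite[Theorem~4]{YZ22}.

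The main obstacle is the combinatorial bookkeeping at the four exceptional values $b\in\{\pm 1,\pm c\}$: when a formal branch $x_\epsilon(b)$ degenerates to $0$ or $-1$ one must avoid double-counting it against the boundary contribution, and the resulting multiplicity at each exceptional $b$ must be placed into the correct $_c\omega_i$-bin. Because several independent quadratic characters simultaneously control that placement, this is what forces the finer split by $\eta(2)$ and $\eta(c)$ in parts (ii) and (iii).
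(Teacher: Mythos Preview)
Your plan is essentially the paper's proof: the same case split on $(\eta(x+1),\eta(x))$, the same reduction to the sign pattern $(\eta(b-1),\eta(b+1),\eta(b-c),\eta(b+c))$, the same indicator expansion $\prod\tfrac12(1+\epsilon\eta(\cdot))$, the same use of~\eqref{eq:Jacobsthal} for the quadratic pieces, and the same $u=b^2$ substitution turning the quartic sum into the cubic character sum attached to $E:y^2=u(u-1)(u-c^2)$.

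One point needs correcting. The four \emph{triple} sums, e.g.\ $A=\sum_b\eta((b^2-1)(b-c))$ and $B=\sum_b\eta((b-1)(b^2-c^2))$, do \emph{not} reduce to Jacobsthal sums by an affine substitution: the underlying cubics $(b-1)(b+1)(b\mp c)$ have three distinct roots (since $c\ne\pm1$), so each is itself an elliptic-curve character sum, not a quadratic one. The reason they cause no trouble is different: under $\eta(-1)=1$ the involution $b\mapsto -b$ gives $\sum_b\eta((b^2-1)(b-c))=\sum_b\eta((b^2-1)(b+c))$ and $\sum_b\eta((b-1)(b^2-c^2))=\sum_b\eta((b+1)(b^2-c^2))$, and in the combinations of $|S^c_{i,j,u,v}|$ that actually occur in $_c\omega_0,\dots,{}_c\omega_4$ these triple sums always enter with cancelling signs. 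This is exactly what the paper exploits (``the symmetries in the sets and the spectra imply that we do not need the values of $A$ and $B$''), and it is the right mechanism to invoke in your write-up.
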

\begin{proof}
When   $c\neq 0, \pm1$, we analyze the possible solutions for~\eqref{eq:main}:

\begin{itemize}

\item If $x_1=0$ is a solution,  then $b=1$, otherwise, 0 is not a solution.

\item If $x_2=-1$ is a solution,  then $b/c=\eta(-1)$, otherwise $-1$ is not a solution. 

\item Let $\eta(x+1)=1=\eta(x)$. Then:
\begin{align*}
(1-c)x+1&=b,\text{ and so, } x_3=\frac{b-1}{1-c},\text{ thus},\\
\eta(x_3)&=\eta(b-1)\eta(1-c),\\
\eta(x_3+1)&=\eta(b-c)\eta(1-c).
\end{align*}

\item Let $\eta(x+1)=-1=\eta(x)$. Then:
\allowdisplaybreaks
\begin{align*}
-(1-c)x-1&=b ,\text{ and so, } x_4=-\frac{b+1}{1-c}\\
\eta(x_4)&=\eta(-1)\eta(b+1)\eta(1-c),\\
\eta(x_4+1)&=\eta(-1)\eta(b+c)\eta(1-c).
\end{align*}

\item Let $\eta(x+1)=1=-\eta(x)$. Then:
\allowdisplaybreaks
\begin{align*}
(1+c)x+1&=b, \text{ and so, }  x_5=\frac{b-1}{1+c},\\
\eta(x_5)&=\eta(b-1)\eta(1+c),\\
\eta(x_5+1)&=\eta(b+c)\eta(1+c).
\end{align*}

\item Let $\eta(x+1)=-1=-\eta(x)$. Then:
\allowdisplaybreaks
\begin{align*}
(-1-c)x-1&=b,\text{ and so, } x_6=\frac{b+1}{-1-c},\\
\eta(x_6)&=\eta(-1)\eta(b+1)\eta(1+c),\\
\eta(x_6+1)&=\eta(-1)\eta(b-c)\eta(1+c).
\end{align*}

\end{itemize}

%NB: DEPENDING ON THE CASE, 1 AND/OR -1 MIGHT BE EQUAL TO ONE OR MORE OF THE OTHER SOLUTIONS, INCREASING THE COUNT FOR THESE VALUES.

\noindent
{\bf Case A.} $\eta(1-c)=1=\eta(1+c)$   ($c\neq 0,\pm1$). Then:
\allowdisplaybreaks
\begin{align*}
\eta(x_3)&=\eta(b-1)=1, \eta(x_3+1)=\eta(b-c)=1,\\
\eta(x_4)&=\eta(-1)\eta(b+1)=-1, \eta(x_4+1)=\eta(-1)\eta(b+c)=-1,\\
\eta(x_5)&=\eta(b-1)=-1, \eta(x_5+1)=\eta(b+c)=1,\\
\eta(x_6)&=\eta(-1)\eta(b+1)=1, \eta(x_6+1)=\eta(-1)\eta(b-c)=-1.
\end{align*}

\noindent
{\bf Case A1.} $\eta(-1)=1$. Then:
\allowdisplaybreaks
\begin{align*}
 \eta(x_3)&=\eta(b-1)=1,  \eta(x_3+1)=\eta(b-c)=1,\\
\eta(x_4)&=\eta(b+1)=-1, \eta(x_4+1)=\eta(b+c)=-1,\\
\eta(x_5)&=\eta(b-1)=-1, \eta(x_5+1)=\eta(b+c)=1,\\
\eta(x_6)&=\eta(b+1)=1, \eta(x_6+1)=\eta(b-c)=-1.
\end{align*}

We see that, for a fixed $b$, if $x_3$ exists, then $x_5$ and $x_6$ cannot exist, and viceversa. Also, if $x_4$ exists, then $x_5$ and $x_6$ cannot exist, and viceversa. We can therefore have at most the solutions $x_3$ and $x_4$, or $x_5$ and $x_6$. 

NB: For $b=-1$, if $\eta(2)=1$ (that is,  $p \equiv \pm 1 \pmod 8$, or,  $p\equiv \pm 3\pmod 8$ and $n$ even), we get a solution $x_3$. If $\eta(2)=-1$  (that is,   $p\equiv \pm 1 \pmod 8$ or $n$ odd), we get a solution $x_5$. Regardless, we can see that $b=-1$ gives one solution. 
For $b=-c$, if $\eta(2c)=1$, we get a solution~$x_3$. If $\eta(2c)=-1$, we get a solution~$x_6$. Regardless, we can see that $b=-c$ gives one solution. 

For $c$ fixed, and $i,j,u,v\in\{\pm1 \}$, we define the following intersection of cyclotomic sets, 
\begin{align*}
S^c_{i,j,u,v}&=\{ b\neq\pm1,\pm c : (\eta(b-1),\eta(b+1),\eta(b-c),\eta(b+c))=(i,j,u,v)\}.
\end{align*}

We note the following distribution, which depends on the existence of such a~$b$ for each fixed $c$: 
\allowdisplaybreaks
\begin{align*}
_c{\omega_0}&=|S^c_{1,-1,-1,1}\cup S^c_{-1,1,1,-1}|,\\ 
_c{\omega_2}&=|S^c_{1,-1,1,-1}\cup S^c_{-1,1,-1,1}|,\\ 
_c{\omega_1}&=p^n-(_c{\omega_0}+\,_c{\omega_2}).
\end{align*}

\noindent
{\bf Case B.} $\eta(1-c)=-1=\eta(1+c)$ (thus,  $c\neq\pm1 $). In this case, all the above equations change sign, which implies the same compatibilities. The distribution is then given as above, with the values of $\eta(b-1),\eta(b+1),\eta(b-c),\eta(b+c)$   of opposite sign, giving:

\noindent
{\bf Case B1.} $\eta(-1)=1$: 
\allowdisplaybreaks
\begin{align*}
_c{\omega_0}&=|S^c_{1,-1,-1,1}\cup S^c_{-1,1,1,-1}|,\\ 
_c{\omega_2}&=|S^c_{1,-1,1,-1}\cup S^c_{-1,1,-1,1}|,\\ 
_c{\omega_1}&={p^n}-(_c{\omega_0}+\,_c{\omega_2}).
\end{align*}

\noindent {\bf Case C.} $\eta(1-c)=1=-\eta(1+c)$ (thus,  $c\neq\pm1 $):
\allowdisplaybreaks
\begin{align*}
\eta(x_3)&=\eta(b-1)=1, \eta(x_3+1)=\eta(b-c)=1,\\
\eta(x_4)&=\eta(-1)\eta(b+1)=-1, \eta(x_4+1)=\eta(-1)\eta(b+c)=-1,\\
\eta(x_5)&=-\eta(b-1)=-1,  \eta(x_5+1)=-\eta(b+c)=1,\\
\eta(x_6)&=-\eta(-1)\eta(b+1)=1, \eta(x_6+1)=-\eta(-1)\eta(b-c)=-1.
\end{align*}

\noindent {\bf Case C1.} $\eta(-1)=1$:
\allowdisplaybreaks
\begin{align*}
\eta(x_3)&=\eta(b-1)=1, \eta(x_3+1)=\eta(b-c)=1,\\
\eta(x_4)&=\eta(b+1)=-1, \eta(x_4+1)=\eta(b+c)=-1,\\
\eta(x_5)&=-\eta(b-1)=-1\Rightarrow\eta(b-1)=1,\\
\eta(x_5+1)&=-\eta(b+c)=1\Rightarrow\eta(b+c)=-1,\\
\eta(x_6)&=-\eta(b+1)=1\Rightarrow\eta(b+1)=-1,\\
\eta(x_6+1)&=-\eta(b-c)=-1\Rightarrow\eta(b-c)=1.
\end{align*}
We see that, for a fixed $b\neq\pm1,\pm c$, all four solutions are a priori possible. 

NB: For $b=1$, if $\eta(2)=-1$, we get the distinct solutions $x_4\neq0$ and $x_6\neq0$. In this case, we can see that $b=1$ gives three distinct solutions. 
For $b=c$, we get the distinct solutions $x_4,x_5$ if $\eta(2c)=-1$, and the solution $x_2$, giving three distinct solutions in that case.

The distribution in this case is: 
\begin{itemize}
\item
If $\eta(2)=1$ and $\eta(2c)=1$ (thus, $\eta(c)=1$): 
\allowdisplaybreaks
\begin{align*}
_c{\omega_1}&=|S^c_{1,1,1,1}\cup S^c_{1,1,-1,-1}\cup S^c_{-1,-1,1,1}\cup S^c_{-1-1,-1,-1}\cup\{b=1,c\}|,\\
_c{\omega_2}&=|S^c_{1,1,1,-1}\cup S^c_{1,-1,1,1}\cup S^c_{1,-1,-1,-1}\cup S^c_{-1,-1,1,-1}|,\\
_c{\omega_4}&=|S^c_{1,-1,1,-1}|,\\
_c{\omega_0}&={p^n}-(_c{\omega_1}+\,_c{\omega_2}+\,_c{\omega_4}).
\end{align*}

\item 
If $\eta(2)=1$ and $\eta(2c)=-1$ (thus, $\eta(c)=-1$): 
\allowdisplaybreaks
\begin{align*}
_c{\omega_1}&=|S^c_{1,1,1,1}\cup S^c_{1,1,-1,-1}\cup S^c_{-1,-1,1,1}\cup S^c_{-1-1,-1,-1}\cup\{b=1\}|,\\
_c{\omega_2}&=|S^c_{1,1,1,-1}\cup S^c_{1,-1,1,1}\cup S^c_{1,-1,-1,-1}\cup S^c_{-1,-1,1,-1}|,\\
_c{\omega_3}&=|\{b=c\}|,\\
_c{\omega_4}&=|S^c_{1,-1,1,-1}|,\\
_c{\omega_0}&={p^n}-\-(_c{\omega_1}+\, _c{\omega_2}+\,_c{\omega_3}+\, _c{\omega_4}).
\end{align*}

\item
If $\eta(2)=-1$ and $\eta(2c)=1$ (thus, $\eta(c)=-1$):
\allowdisplaybreaks
\begin{align*}
_c{\omega_1}&=|S^c_{1,1,1,1}\cup S^c_{1,1,-1,-1}\cup S^c_{-1,-1,1,1}\cup S^c_{-1-1,-1,-1}\cup\{b=c\}|,\\
_c{\omega_2}&=|S^c_{1,1,1,-1}\cup S^c_{1,-1,1,1}\cup S^c_{1,-1,-1,-1}\cup S^c_{-1,-1,1,-1}|,\\
_c{\omega_3}&=|\{b=1\}|,\\
_c{\omega_4}&=|S^c_{1,-1,1,-1}|,\\
_c{\omega_0}&={p^n}-(_c{\omega_1}+\, _c{\omega_2}+\,_c{\omega_3}+\, _c{\omega_4}).
\end{align*}

\item If $\eta(2)=-1$ and $\eta(2c)=-1$ (thus, $\eta(c)=1$):
\allowdisplaybreaks
\begin{align*}
_c{\omega_1}&=|S^c_{1,1,1,1}\cup S^c_{1,1,-1,-1}\cup S^c_{-1,-1,1,1}\cup S^c_{-1-1,-1,-1}|,\\
_c{\omega_2}&=|S^c_{1,1,1,-1}\cup S^c_{1,-1,1,1}\cup S^c_{1,-1,-1,-1}\cup S^c_{-1,-1,1,-1}|,\\
_c{\omega_3}&=|\{b=1,c\}|,\\
_c{\omega_4}&=|S^c_{1,-1,1,-1}|,\\
_c{\omega_0}&={p^n}-(_c{\omega_1}+\, _c{\omega_2}+\, _c{\omega_3}+\, _c{\omega_4}).
\end{align*}
\end{itemize}

\noindent {\bf Case D.} $\eta(1-c)=-1=-\eta(1+c) (\Rightarrow c\neq\pm1)$: In this case, all above equations change sign, which implies the same compatibilities.

\noindent {\bf Case D1.} $\eta(-1)=1$: 
\begin{itemize}
\item If $\eta(2)=1$ and $\eta(2c)=\eta(c)=1$:
\allowdisplaybreaks
\begin{align*}
_c{\omega_1}&=|S^c_{-1,-1,-1,-1}\cup S^c_{-1,-1,1,1}\cup S^c_{1,1,-1,-1}\cup S^c_{1,1,1,1}|,\\
_c{\omega_2}&=|S^c_{-1,-1,-1,1}\cup S^c_{-1,1,-1,-1}\cup S^c_{-1,1,1,1}\cup S^c_{1,1,-1,1}|,\\
_c{\omega_3}&=|\{b=1,c\}|,\\
_c{\omega_4}&=|S^c_{-1,1,-1,1}|,\\
_c{\omega_0}&={p^n}-(_c{\omega_1}+\,_c{\omega_2}+\,_c{\omega_3}+\,_c{\omega_4}).
\end{align*}
\item If $\eta(2)=1$ and $\eta(2c)=\eta(c)=-1$:
\allowdisplaybreaks
\begin{align*}
_c{\omega_1}&=|S^c_{-1,-1,-1,-1}\cup S^c_{-1,-1,1,1}\cup S^c_{1,1,-1,-1}\cup S^c_{1,1,1,1}\cup\{b=c\}|,\\
_c{\omega_2}&=|S^c_{-1,-1,-1,1}\cup S^c_{-1,1,-1,-1}\cup S^c_{-1,1,1,1}\cup S^c_{1,1,-1,1}|,\\
_c{\omega_3}&=|\{b=1\}|,\\
_c{\omega_4}&=|S^c_{-1,1,-1,1}|,\\
_c{\omega_0}&={p^n}-(_c{\omega_1}+\,_c{\omega_2}+\,_c{\omega_3}+\,_c{\omega_4}).
\end{align*}

\item If $\eta(2)=-1$ and $\eta(2c)=-\eta(c)=1$: 
\allowdisplaybreaks
\begin{align*}
_c{\omega_1}&=|S^c_{-1,-1,-1,-1}\cup S^c_{-1,-1,1,1}\cup S^c_{1,1,-1,-1}\cup S^c_{1,1,1,1}\cup\{b=1\}|,\\
_c{\omega_2}&=|S^c_{-1,-1,-1,1}\cup S^c_{-1,1,-1,-1}\cup S^c_{-1,1,1,1}\cup S^c_{1,1,-1,1}|,\\
_c{\omega_3}&=|\{b=c\}|,\\
_c{\omega_4}&=|S^c_{-1,1,-1,1}|,\\
_c{\omega_0}&={p^n}-(_c{\omega_1}+\,_c{\omega_2}+\,_c{\omega_4}).
\end{align*}
\item If $\eta(2)=-1$ and $\eta(2c)=-\eta(c)=-1$: 
\allowdisplaybreaks
\begin{align*}
_c{\omega_1}&=|S^c_{-1,-1,-1,-1}\cup S^c_{-1,-1,1,1}\cup S^c_{1,1,-1,-1}\cup S^c_{1,1,1,1}\cup\{b=1,c\}|,\\
_c{\omega_2}&=|S^c_{-1,-1,-1,1}\cup S^c_{-1,1,-1,-1}\cup S^c_{-1,1,1,1}\cup S^c_{1,1,-1,1}|,\\
_c{\omega_4}&=|S^c_{-1,1,-1,1}|,\\
_c{\omega_0}&={p^n}-(_c{\omega_1}+\,_c{\omega_2}+\,_c{\omega_3}+\,_c{\omega_4}).
\end{align*}
\end{itemize}

%Jacobsthal sums for the quartics:

For fixed $i,j,u,v\in\{\pm1 \}$, we will now compute the  cardinalities of the cyclotomic sets $S^c_{i,j,u,v}$, which, we recall are defined by 
\allowdisplaybreaks
\begin{align*}
S^c_{i,j,u,v}&=\{b\neq\pm1,\pm c : (\eta(b-1),\eta(b+1),\eta(b-c),\eta(b+c))=(i,j,u,v)\}. 
\end{align*}
For fixed $(i,j,u,v)$, we have
\allowdisplaybreaks
\begin{align*}
&\sum_{b\in\F_{p^n}} (1+i\eta(b-1))(1+j\eta(b+1))(1+u\eta(b-c))(1+v\eta(b+c))\\
=&16|S^c_{ijuv}|+(1+j\eta(2))(1+u\eta(1-c))(1+v\eta(1+c))\\
&+(1+i\eta(-2))(1+u\eta(-1-c))(1+v\eta(-1+c))\\
&+(1+i\eta(c-1))(1+j\eta(c+1))(1+v\eta(2c))\\
&+(1+i\eta(-c-1))(1+j\eta(-c+1))(1+u\eta(-2c)).
\end{align*}
On the other hand,
\allowdisplaybreaks
\begin{align*}
&\sum_{b\in\F_{p^n}} (1+i\eta(b-1))(1+j\eta(b+1))(1+u\eta(b-c))(1+v\eta(b+c))\\
=&\sum_{b\in\F_{p^n}}1+\sum_{x\in\F_{p^n}}i\eta(b-1)+\sum_{b\in\F_{p^n}}j\eta(b+1)+\sum_{b\in\F_{p^n}}u\eta(b-c)\\
&+\sum_{b\in\F_{p^n}}v\eta(b+c)+\sum_{b\in\F_{p^n}}i\eta(b-1)j\eta(b+1)+\sum_{b\in\F_{p^n}}i\eta(b-1)u\eta(b-c)\\
&+\sum_{b\in\F_{p^n}}i\eta(b-1)v\eta(b+c)+\sum_{b\in\F_{p^n}}j\eta(b+1)u\eta(b-c)\\
&+\sum_{b\in\F_{p^n}}j\eta(b+1)v\eta(b+c)+\sum_{b\in\F_{p^n}}u\eta(b-c)v\eta(b+c)\\
&+\sum_{b\in\F_{p^n}}i\eta(b-1)j\eta(b+1)u\eta(b-c)+\sum_{b\in\F_{p^n}}i\eta(b-1)j\eta(b+1)v\eta(b+c)\\
&+\sum_{b\in\F_{p^n}}i\eta(b-1)u\eta(b-c)v\eta(b+c)+\sum_{b\in\F_{p^n}}j\eta(b+1)u\eta(b-c)v\eta(b+c)\\
&+\sum_{b\in\F_{p^n}}i\eta(b-1)j\eta(b+1)u\eta(b-c)v\eta(b+c)\\
=&p^n+\sum_{b\in\F_{p^n}}ij\eta(b^2-1)+\sum_{b\in\F_{p^n}}iu\eta(b^2-(1+c)b+c)\\
&+\sum_{b\in\F_{p^n}}iv\eta(b^2+(-1+c)b-c)+\sum_{b\in\F_{p^n}}ju\eta(b^2+(1-c)b-c)\\
&+\sum_{b\in\F_{p^n}}jv\eta(b^2+(1+c)b+c)+\sum_{b\in\F_{p^n}}uv\eta(b^2-c^2)\\
&+\sum_{b\in\F_{p^n}}iju\eta((b^2-1)(b-c))+\sum_{b\in\F_{p^n}}ijv\eta((b^2-1)(b+c))\\
&+\sum_{b\in\F_{p^n}}iuv\eta((b-1)(b^2-c^2)+\sum_{b\in\F_{p^n}}juv\eta((b+1)(b^2-c^2))\\
&+\sum_{b\in\F_{p^n}}ijuv\eta((b^2-1)(b^2-c^2)).
\end{align*}
The cubics and the quartics will be more complicated to deal with. As for the quadratics (taking into account that $c\neq\pm1,0$), and using Equation~\eqref{eq:Jacobsthal}, which gives the Jacobsthal sum in terms of the discriminant $d$  of the polynomial $f$, we obtain
\allowdisplaybreaks
\begin{align*}
f(b)&=b^2-1, d=4\neq0,\\
& \text{ thus, } \sum_{b\in\F_{p^n}}ij\eta(b^2-1)=-ij\eta(1)=-ij,\\
f(b)&=b^2-(1+c)b+c, d=(1+c)^2-4c=c^2-2c+1=(c-1)^2\neq0 \\
&\text{ thus, }\sum_{b\in\F_{p^n}}iu\eta(b^2-(1+c)b+c)=-iu\eta(1)=-iu,\\
f(b)&=b^2+(-1+c)b-c, d=(-1+c)^2+4c=c^2+2c+1=(c+1)^2\neq0, \\
&\text{ thus, }\sum_{b\in\F_{p^n}}iv\eta(b^2+(-1+c)b-c)=-iv\eta(1)=-iv,\\
f(b)&=b^2+(1-c)b-c, d=(1-c)^2+4c=c^2+2c+1=(c+1)^2\neq0, \\
&\text{ thus, }\sum_{b\in\F_{p^n}}ju\eta(b^2+(-1+c)b-c)=-ju\eta(1)=-ju,\\
f(b)&=b^2+(1+c)b+c, d=(1+c)^2-4c=c^2-2c+1=(c-1)^2\neq0, \\
&\text{ thus, }\sum_{b\in\F_{p^n}}jv\eta(b^2+(-1+c)b-c)=-jv\eta(1)=-jv,\\
f(b)&=b^2-c^2, d=4c^2\neq0,\\
&\text{ thus, }\sum_{b\in\F_{p^n}}uv\eta(b^2-c^2)=-uv\eta(1)=-uv.
\end{align*}

Summarizing, we obtain,
\allowdisplaybreaks
\begin{align*}
&\sum_{b\in\F_{p^n}} (1+i\eta(b-1))(1+j\eta(b+1))(1+u\eta(b-c))(1+v\eta(b+c))\\
=&p^n-(ij+iu+iv+ju+jv+uv)\\
&+ij\left(u\sum_{b\in\F_{p^n}}\eta((b^2-1)(b-c))+v\sum_{b\in\F_{p^n}}\eta((b^2-1)(b+c))\right)\\
&+uv\left(i\sum_{b\in\F_{p^n}}\eta((b-1)(b^2-c^2)+j\sum_{b\in\F_{p^n}}\eta((b+1)(b^2-c^2))\right)\\
&+ijuv\sum_{b\in\F_{p^n}}\eta((b^2-1)(b^2-c^2)).
\end{align*}

We can further simplify the sums. If $\eta(-1)=1$, then (changing $b$ to $-b$), we get 
\begin{align*}
\sum_{b\in\F_{p^n}}\eta((b^2-1)(b-c))&=\sum_{b\in\F_{p^n}}\eta((b^2-1)(b+c))\\
 \sum_{b\in\F_{p^n}}\eta((b-1)(b^2-c^2)&=\sum_{b\in\F_{p^n}}\eta((b+1)(b^2-c^2)).
 \end{align*}
If $\eta(-1)=-1$, then 
\allowdisplaybreaks
\begin{align*}
\sum_{b\in\F_{p^n}}\eta((b^2-1)(b-c))&=-\sum_{b\in\F_{p^n}}\eta((b^2-1)(b+c))\\ 
 \sum_{b\in\F_{p^n}}\eta((b-1)(b^2-c^2)&=-\sum_{b\in\F_{p^n}}\eta((b+1)(b^2-c^2)).
 \end{align*}
Therefore, if $i=-\eta(-1)j$ and/or $u=-\eta(-1)v$, there are several simplifications. We denote by $A=\sum_{b\in\F_{p^n}}\eta((b^2-1)(b-c))$, $B=\sum_{b\in\F_{p^n}}\eta((b-1)(b^2-c^2)$ and $C=\sum_{b\in\F_{p^n}} \eta((b^2-1)(b^2-c^2))$.

If $\eta(-1)=1$, we note that, since the map $b\mapsto -b$ also maps $\eta(b-1)\mapsto\eta(b+1)$ and $\eta(b-c)\mapsto\eta(b+c)$, then $|S^c_{1,-1,1,1}|=|S^c_{-1,1,1,1}|$, $|S^c_{1,-1,1,-1}|=|S^c_{-1,1,-1,1}|$, $|S^c_{1,1,1,-1}|=|S^c_{1,1,-1,1}|$ and $|S^c_{1,-1,-1,1}|=|S^c_{-1,1,1,-1}|$. The cardinality of these cyclotomic sets is given here in lexicographic order:
\allowdisplaybreaks
\begin{align*}
|S^c_{1,1,1,1}|&=\frac{1}{16}\left(p^n-6+2A+2B+C\right.\\
&\left.\qquad\quad-2(2+\eta(2)+\eta(2c))(1+\eta(1-c))(1+\eta(1+c))\right),\\
|S^c_{1,1,1,-1}|&=|S^c_{1,1,-1,1}|=\frac{1}{16}\left(p^n-2B-C-2(1+\eta(2))(1-\eta(1-c^2))\right.\\
&\left.\qquad\quad -2(1+\eta(1-c))(1+\eta(1+c))\right),\\
|S^c_{1,1,-1,-1}|&=\frac{1}{16}(p^n+2-2A+2B+C\\
&\qquad\quad-2(1+\eta(2))(1-\eta(1-c))(1-\eta(1+c))\\ 
&\qquad\quad-2(1+\eta(1-c))(1+\eta(1+c))(1-\eta(2c))),\\ 
|S^c_{1,-1,1,1}|&=\frac{1}{16}(p^n-2A-C-2(1+\eta(1-c))(1+\eta(1+c))\\
&\qquad\quad-2(1+\eta(2c))(1-\eta(1-c^2))),\\
|S^c_{1,-1,1,-1}|&=\frac{1}{16}(p^n+2+C\\
&\qquad\quad-(2-\eta(2)-\eta(2c))(1+\eta(1-c))(1-\eta(1+c))\\
&\qquad\quad-(2+\eta(2)+\eta(2c))(1-\eta(1-c))(1+\eta(1+c))),\\
|S^c_{1,-1,-1,1}|&=\frac{1}{16}(p^n+2+C\\
&\qquad\quad-(2-\eta(2)-\eta(2c))(1-\eta(1-c))(1+\eta(1+c))\\
&\qquad\quad-(2+\eta(2)+\eta(2c))(1+\eta(1-c))(1-\eta(1+c))),\\
|S^c_{1,-1,-1,-1}|&=\frac{1}{16}(p^n+2A-C-2(1-\eta(1-c))(1-\eta(1+c))\\
&\qquad\quad-2(1-\eta(2c))(1-\eta(1-c^2))),\\
|S^c_{-1,1,1,1}|&=\frac{1}{16}(p^n-2A-C-2(1+\eta(1-c))(1+\eta(1+c))\\
&\qquad\quad-2(1+\eta(2c))(1-\eta(1-c^2))),\\
|S^c_{-1,1,1,-1}|&=\frac{1}{16}(p^n+2+C\\
&\qquad\quad-(2+\eta(2)+\eta(2c))(1+\eta(1-c))(1-\eta(1-c))\\
&\qquad\quad-(2-\eta(2)-\eta(2c))(1-\eta(1-c))(1+\eta(1-c))),\\
|S^c_{-1,1,-1,1}|&=\frac{1}{16}(p^n+2+C\\
&\qquad\quad-(2+\eta(2)+\eta(2c))(1-\eta(1-c))(1+\eta(1-c))\\
&\qquad\quad-(2-\eta(2)-\eta(2c))(1+\eta(1-c))(1-\eta(1-c))),\\
|S^c_{-1,1,-1,-1}|&=\frac{1}{16}(p^n+2A-C-2(1-\eta(1-c))(1-\eta(1+c))\\
&\qquad\quad-2(1-\eta(2c))(1-\eta(1-c^2))),\\
|S^c_{-1,-1,1,1}|&=\frac{1}{16}(p^n+2+2A-2B+C\\
&\qquad\quad-2(1-\eta(2))(1+\eta(1-c))(1+\eta(1+c))\\
&\qquad\quad-2(1-\eta(c-1))(1-\eta(c+1))(1+\eta(2c))),\\
|S^c_{-1,-1,1,-1}|&=|S^c_{-1,-1,-1,1}|=\frac{1}{16}(p^n+2B-C\\
&\qquad\quad-2(1-\eta(2))(1-\eta(1-c^2))-2(1-\eta(1-c))(1-\eta(1+c))),\\
|S^c_{-1,-1,-1,-1}|&=\frac{1}{16}(p^n-6-2A-2B+C\\
&\qquad\quad-2(2-\eta(2)-\eta(2c))(1-\eta(1-c))(1-\eta(1+c))).
\end{align*}

%Just the sum without terms:
%
%$$|S^c_{1,-1,1,-1}|=|S^c_{1,-1,-1,1}|=|S^c_{-1,1,1,-1}|=|S^c_{-1,1,-1,1}|=\frac{1}{16}(p^n+2+C)$$
%$$|S^c_{1,1,1,-1}|=|S^c_{1,1,-1,1}|=\frac{1}{16}(p^n-2B-C)$$
%$$|S^c_{-1,-1,1,-1}|=|S^c_{-1,-1,-1,1}|=\frac{1}{16}(p^n+2B-C)$$
%$$|S^c_{1,-1,1,1}|=|S^c_{-1,1,1,1}|=\frac{1}{16}(p^n-2A-C)$$
%$$|S^c_{1,-1,-1,-1}|=|S^c_{-1,1,-1,-1}|=\frac{1}{16}(p^n+2A-C)$$
%$$|S^c_{1,1,1,1}|=\frac{1}{16}(p^n-6+2A+2B+C)$$
%$$|S^c_{-1,-1,-1,-1}|=\frac{1}{16}(p^n-6-2A-2B+C)$$
%$$|S^c_{-1,-1,1,1}|=\frac{1}{16}(p^n+2+2A-2B+C)$$
%$$|S^c_{1,1,-1,-1}|=\frac{1}{16}(p^n+2-2A+2B+C)$$

%Note that the symmetries in the sets and the spectra imply that we do not need the values of $A$ and $B$.

For Cases A1 and B1 (that is, when $\eta(-1)=1$ and $\eta(1-c^2)=1$), the spectrum is 
(recall that $C=\sum_{b\in\F_{p^n}} \eta((b^2-1)(b^2-c^2))$)
\allowdisplaybreaks
\begin{align*}
_c{\omega_0}&=|S^c_{1,-1,-1,1}|+|S^c_{-1,1,1,-1}|=\frac{1}{8}(p^n+2+C),\\
_c{\omega_2}&=|S^c_{1,-1,1,-1}|+S^c_{-1,1,-1,1}|=\frac{1}{8}(p^n+2+C),\\
_c{\omega_1}&=p^n-(_c{\omega_0}+\,_c{\omega_2})=\frac{1}{4}(3p^n-2-C).
\end{align*}

For Case C1 (that is, $\eta(-1)=1$, $\eta(1-c)=1,\eta(1+c)=-1$): 
\allowdisplaybreaks
\begin{align*}
_c{\omega_1}&=|S^c_{1,1,1,1}|+|S^c_{1,1,-1,-1}|+|S^c_{-1,-1,1,1}|+|S^c_{-1,-1,-1,-1}|\\
&\qquad\qquad\qquad\qquad \qquad\qquad\qquad  +\left\{\begin{array}{ll}
2,&\eta(2)=1,\eta(c)=1\\
1,&\eta(c)=-1\\
0,&\eta(2)=-1,\eta(c)=1
\end{array}\right.\\
&=\frac{1}{4}(p^n-2+C)+\left\{\begin{array}{ll}
2,&\eta(2)=1,\eta(c)=1\\
1,&\eta(c)=-1\\
0,&\eta(2)=-1,\eta(c)=1,
\end{array}\right.\\
_c{\omega_2}&=|S^c_{1,1,1,-1}|+|S^c_{1,-1,1,1}|+|S^c_{1,-1,-1,-1}|+|S^c_{-1,-1,1,-1}|=\frac{1}{4}(p^n-C-4),\\
_c{\omega_3}&=\left\{\begin{array}{ll}
1,&\eta(c)=-1\\
0,&\eta(c)=1,
\end{array}\right.\\
_c{\omega_4}&=|S^c_{1,-1,1,-1}|=\frac{1}{16}(p^n+2+C-4(2+\eta(2)+\eta(2c))),\\
_c{\omega_0}&=p^n-(_c{\omega_1}+\,_c{\omega_2}+\,_c{\omega_3}+\,_c{\omega_4}).
\end{align*}

For the case D1 (that is, $\eta(-1)=1$, $\eta(1-c)=-1,\eta(1+c)=1$):
\allowdisplaybreaks
\begin{align*}
_c{\omega_1}&=|S^c_{1,1,1,1}|+|S^c_{1,1,-1,-1}|+|S^c_{-1,-1,1,1}|+|S^c_{-1,-1,-1,-1}|\\
&\qquad\qquad\qquad\qquad \qquad\qquad\qquad   +\left\{\begin{array}{ll}
0,&\eta(2)=1,\eta(c)=1\\
1,&\eta(c)=-1\\
2,&\eta(2)=-1,\eta(2c)=-1,
\end{array}\right.\\
&=\frac{1}{4}(p^n-2+C)+\left\{\begin{array}{ll}
0,&\eta(2)=1,\eta(2c)=1\\
1,&\eta(c)=-1\\
2,&\eta(2)=-1,\eta(2c)=-1,
\end{array}\right.\\
_c{\omega_2}&=|S^c_{-1,-1,-1,1}|+|S^c_{-1,1,-1,-1}|+|S^c_{-1,1,1,1}|+|S^c_{1,1,-1,1}|=\frac{1}{4}(p^n-C-4),\\
_c{\omega_3}&=\left\{\begin{array}{ll}
2,&\eta(2)=1,\eta(c)=1\\
1,&\eta(c)=-1\\
0,&\eta(2)=-1,\eta(c)=1,
\end{array}\right.\\
_c{\omega_4}&=|S^c_{-1,1,-1,1}|=\frac{1}{16}(p^n+2+C-4(2-\eta(2)-\eta(2c))),\\
_c{\omega_0}&=p^n-(_c{\omega_1}+\,_c{\omega_2}+\,_c{\omega_3}+\,_c{\omega_4}).
\end{align*}
%Note that $A$ and $B$ are not relevant for the spectrum.
Note that the symmetries in the sets and the spectra imply that we do not need the values of $A$ and $B$.

We can furthermore transform the quartic $C$ into the sum a quadratic and a cubic. Using the labelling $b^2=a$, we get the sums
\allowdisplaybreaks
\begin{align*}
&\sum_{b\in\F_{p^n}} \eta((b^2-1)(b^2-c^2)) = \sum_{a\in\F_{p^n}} (\eta(a)+1) \eta((a-1)(a-c^2))\\
&= \sum_{a\in\F_{p^n}}  \eta(a(a-1)(a-c^2))+\sum_{a\in\F_{p^n}}  \eta((a-1)(a-c^2))\\
&= \sum_{a\in\F_{p^n}} \eta(a(a-1)(a-c^2))-1,
\end{align*}
where we used Jacobsthal's sum~\eqref{eq:Jacobsthal} for the second sum.

In order to compute the sum corresponding to the above cubic, we count points on elliptic curves.
We could consider several particular cases for $c\neq 0,\pm 1$ (like $c^2\in\F_p$), but we prefer simplicity (though, we make some remark after the proof of the theorem on that, as well as provide a result for $c^2=-1$ in Theorem~\ref{thm:c-1}).

In order to find 
$\sum_{a\in\F_{p^n}} \eta(a(a-1)(a-c^2))$, we consider the elliptic curve $E(\F_{p^n}):\ y^2=x(x-1)(x-c^2)$ over the   field $\F_{p^n}$. 
By the Hasse-Weil inequality~\cite[Corollary 4.15]{Wash08}, we know that $|\#E(\F_{p^n})-(p^n+1)|\leq 2\sqrt{p^n}$.
If we denote by $a_{p^n}^{(c)}=\#E(\F_{p^n})-(p^n+1)$, the Hasse-Weil inequality error, then  
\allowdisplaybreaks
\begin{align*}
C+1&=\sum_{a\in\F_{p^n}} \eta(a(a-1)(a-c^2))\\
&= \#\{ a\neq 0,1,c^2\,:\, a(a-1)(a-c^2)\text{ is a square}\}\\
&\qquad -\#\{ a\neq 0,1,c^2\,:\, a(a-1)(a-c^2)\text{ is a non-square}\}\\
&= 2\#\{ a\neq 0,1,c^2\,:\, a(a-1)(a-c^2)\text{ is a square}\}-p^n+3\\
&= \#E(\F_{p^n})-p^n+3=a_{p^n}^{(c)}+4,
\end{align*}
and so, $C=  a_{p^n}^{(c)}+3$.

Substituting this in the spectra expressions given above, we get the claimed spectra and the theorem is shown. 
\end{proof}

We can express the  sum $C$ with respect to a smaller field, namely $\F_q=\F_p(c^2)$, where $q=p^r$, for some positive integer $r$, in the following way. First, we recall a (particular case of a) celebrated result of Weil and generalized by Deligne~(see~\cite[Theorem 4.12]{Wash08}, for more details).  Given an elliptic curve in the Weierstrass form $y^2=x^3+Ax+B$ over some finite field $\F_{q}$ ($q$ is a prime power), if the number of rational  points of the elliptic curve over the base field $\F_q$ is $\#E(\F_q)=q+1-a_q$, the Frobenius polynomial of the curve is $T^2-a_q T+q=(T-\alpha)(T-\beta)$ (for some complex numbers $\alpha,\beta$), and the number of  points of the same curve over the extension field $\F_{q^n}$ is $\#E(\F_{q^n})=(1-\alpha^n)(1-\beta^n) =1+q^n-(\alpha^n+\beta^n)$, for all $n\geq 1$. We thus get $a_{p^n}^{(c)}=\alpha^n+\beta^n$, where 
$\alpha=\frac{1}{2}\left( a_q^{(c)}+\sqrt{\left(a_q^{(c)}\right)^2-4q}\right)$, 
$\beta=\frac{1}{2}\left( a_q^{(c)}-\sqrt{\left(a_q^{(c)}\right)^2-4q}\right)$. Observe that when $c^2\in \F_p$, then $q=p$. 

We will consider the special case of $c^2=-1$ (and so, $p\equiv 1\pmod 4$, or, $p\equiv 3\pmod 4$ and $n$ even) below.

%Recall that $2$ is a quadratic residue if and only if $p \equiv \pm 1 \pmod 8$, or,  $p\equiv \pm 3\pmod 8$ and $n$ even.

\begin{theorem}
\label{thm:c-1}
Let $p$ be a prime number such that either $p\equiv 1\pmod 4$, or $p\equiv 3\pmod 4$ and $n$ even. Let $c$ be a square root of $-1$ in $\F_{p^n}$. The $c$-differential spectrum of $x\mapsto x^{\frac{p^n+1}{2}}$ on  $\F_{p^n}$ is given by the following$:$
\allowdisplaybreaks
\begin{itemize}
\item[$(i)$]  If $n$ is even and either  $p\equiv 3\pmod 8$ or $p\equiv 7\pmod 8$ (that is, $p\equiv 1\pmod 4$ and $n$ even), then
% 2 is a square, $a_p^{(c)}=0$ and $a_{p^n}^{(c)}=2(-p)^{\frac{n}{2}}$
\allowdisplaybreaks
\begin{align*}
_c{\omega_0}&=\frac{1}{8}\left(p^n+2(-p)^{\frac{n}{2}}+5\right),\\
_c{\omega_1}&=\frac{1}{4}\left(3p^n-2(-p)^{\frac{n}{2}}-5\right),\\
_c{\omega_2}&=\frac{1}{8}\left(p^n+2(-p)^{\frac{n}{2}}+5\right).
\end{align*}
Note that this implies that the map is APcN.

\item[$(ii)$]   Let $p\equiv 1\pmod 4$, and write $p=a^2+b^2$, $a,b$ integers, $b$ even, and $a+b\equiv 1\pmod 4$.
\begin{enumerate}
\item[$(1)$]  If $p\equiv 1\pmod8$ or $p\equiv 5\pmod 8$ and   $n$ even, then
\allowdisplaybreaks
\begin{align*}
_c{\omega_0}&=\frac{1}{8}\left(p^n+(a+bc)^n+(a-bc)^n+5\right),\\
_c{\omega_1}&=\frac{1}{4}\left(3p^n-(a+bc)^n-(a-bc)^n-5\right),\\
_c{\omega_2}&=\frac{1}{8}\left(p^n+(a+bc)^n+(a-bc)^n+5\right).
\end{align*}
Note that this implies that the map is APcN.

\item[$(2)$]  
If $p\equiv 5\pmod 8$ and $n$ odd, then 
\allowdisplaybreaks
\begin{align*}
_c{\omega_0} &=\frac{1}{16}\left(7p^n-  (a+bc)^n-(a-bc)^n-5\right),\\
_c{\omega_1}
&=\frac{1}{4}\left(p^n+ (a+bc)^n+(a-bc)^n+5\right),\\
_c{\omega_2}&=\frac{1}{4} \left(p^n-(a+bc)^n-(a-bc)^n-7\right),\\
_c{\omega_3}&=1,\\
_c{\omega_4}& =\frac{1}{16}\left(p^n+(a+bc)^n+(a-bc)^n-3 \right).
\end{align*}
%\item[$(b)$]  If  $\eta(1-c)=-1,\eta(1+c)=1$, then
%\allowdisplaybreaks
%\begin{align*}
%_c{\omega_0} &=\frac{1}{16}\left(7p^n- (a+bc)^n-(a-bc)^n+10\right),\\
%_c{\omega_1}&=\frac{1}{4}\left(p^n+(a+bc)^n+(a-bc)^n+5\right),\\
%_c{\omega_2}&=\frac{1}{4}\left(p^n- (a+bc)^n-(a-bc)^n-7\right),\\
%_c{\omega_3}&=1,\\
%_c{\omega_4}&=\frac{1}{16}\left(p^n+(a+bc)^n+(a-bc)^n-3\right).
%\end{align*}
% \end{enumerate}
\end{enumerate}
\end{itemize}
\end{theorem}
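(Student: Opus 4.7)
The plan is to reduce Theorem~\ref{thm:c-1} to Theorem~\ref{thm:eta=1} by computing the Frobenius trace $a_{p^n}^{(c)}$ of the elliptic curve $E:\ y^2 = x(x-1)(x-c^2)$ in the special case $c^2=-1$. In this case $E$ becomes $y^2=x^3-x$, a well-studied CM curve whose point counts are classical. The argument has three stages: (a) identify which sub-case of Theorem~\ref{thm:eta=1} governs the triple $(p,n,c)$ at hand; (b) compute $a_{p^n}^{(c)}$ explicitly; (c) substitute.

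For stage (a), since $1-c^2 = 2$, we have $\eta(1-c^2)=\eta(2)$, and the paper's recollection that $\eta(2)=1$ iff $p\equiv\pm1\pmod 8$ or ($p\equiv\pm3\pmod 8$ with $n$ even) shows the following. In cases (i) and (ii)(1) of the theorem (i.e.\ $p\equiv 3\pmod 4$ with $n$ even, or $p\equiv 1\pmod 8$, or $p\equiv 5\pmod 8$ with $n$ even), we have $\eta(2)=1$ in $\F_{p^n}$, so $\eta(1-c^2)=1$ and Theorem~\ref{thm:eta=1}(i) applies, giving the APcN conclusion. In case (ii)(2) (i.e.\ $p\equiv 5\pmod 8$ with $n$ odd), $\eta(2)=-1$, and from $(1-c)(1+c)=2$ we get $\eta(1-c)\eta(1+c)=-1$, placing us in Theorem~\ref{thm:eta=1}(ii) or (iii) (which are symmetric under $c\mapsto -c$ and yield the same multiset). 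To apply those formulas I also need $\eta(c)$: since $p^n\equiv 5\pmod 8$, the group $\F_{p^n}^*$ contains no element of order $8$, so $-1$ is not a fourth power and therefore $c=\sqrt{-1}$ is a non-square, i.e.\ $\eta(c)=-1$. Plugging $\eta(c)=-1$ into the formulas collapses the $_c{\omega_3}$ term to $1$ and kills the $\eta(2)(1+\eta(c))$ correction in $_c{\omega_4}$.

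For stage (b), I invoke the classical CM theory for $y^2=x^3-x$. When $p\equiv 3\pmod 4$ the curve is supersingular over $\F_p$ with $a_p=0$; the Frobenius eigenvalues are $\pm i\sqrt{p}$, and the Weil/Deligne recursion $a_{p^n}=\alpha^n+\beta^n$ gives $a_{p^n}^{(c)}=2(-p)^{n/2}$ for $n$ even, matching case (i). When $p\equiv 1\pmod 4$, writing $p=a^2+b^2$ with $a$ odd, $b$ even, $a+b\equiv 1\pmod 4$, a classical theorem of Gauss identifies the Frobenius eigenvalues as $a\pm bi\in\Z[i]$, so $a_{p^n} = (a+bi)^n+(a-bi)^n$ as an integer. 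Because $c^2=-1$ in $\F_{p^n}$, the reduction modulo $p$ of this integer equals $(a+bc)^n+(a-bc)^n$ computed inside $\F_{p^n}$, which is what the statement uses.

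For stage (c), substituting the expressions from (b) into the formulas of Theorem~\ref{thm:eta=1} identified in (a) produces precisely the spectra displayed in (i), (ii)(1), and (ii)(2), with the $_c{\omega_0}$ entry in (ii)(2) computed as $p^n-(_c{\omega_1}+{}_c{\omega_2}+{}_c{\omega_3}+{}_c{\omega_4})$; a short arithmetic check confirms this equals $\tfrac{1}{16}(7p^n-(a+bc)^n-(a-bc)^n-5)$. The main conceptual obstacle is the careful bookkeeping of $\eta(2)$ and $\eta(c)$ modulo $8$ (and modulo the parity of $n$) needed to select the correct branch of Theorem~\ref{thm:eta=1}; the rest is an application of the Hasse/Deligne formula for the CM curve $y^2=x^3-x$ together with direct substitution.
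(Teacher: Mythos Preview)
Your proposal is correct and follows essentially the same approach as the paper: reduce to Theorem~\ref{thm:eta=1}, identify the relevant branch via $\eta(1-c^2)=\eta(2)$ and $\eta(c)$, and compute $a_{p^n}^{(c)}$ for $y^2=x^3-x$ via its Frobenius eigenvalues. The only cosmetic differences are that the paper derives $\eta(c)=-1$ from the identity $1+c=c(1-c)$ rather than your fourth-power argument, and it cites Washington's explicit formula for $a_p$ on $y^2=x^3-kx$ in place of your appeal to Gauss.
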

\begin{proof}
We use Theorem~\ref{thm:eta=1} and observe that the curve corresponding to the sum $C$ is now $y^2=x^3-x$ over $\F_p$. 
We will need a result~\cite[Theorem 4.23]{Wash08} dealing with the number of points of the well-known (since Gauss' time) elliptic curve $y^2=x^3-kx$ over $\F_p$, $k\not\equiv 0\pmod p$. If $p\equiv 3\pmod 4$, then $a_p=0$; if $p\equiv 1\pmod 4$, writing $p=a^2+b^2$, $a,b$ integers, $b$ even and $a+b\equiv 1\pmod 4$, then 
\begin{align*}
a_p=\begin{cases}
2a & \text{ if $k$ is a $4$th power mod $p$}\\
- 2a & \text{ if $k$ is a square but not a $4$th power mod $p$}\\
\pm 2b & \text{ if $k$ is not a $4$th power mod $p$}.
\end{cases}
\end{align*}
 In our case $k=1$, and the first case above applies when $p\equiv 1\pmod 4$. We will go through several cases, depending on the residue of $p\pmod 8$, since we know that 
 $2$ is a quadratic residue if and only if $p\equiv \pm 1\pmod 8$, or $n$ even. %Note that we need 
We will use below the observation that when $c^2=-1$, then $1+c=-c^2+c=c(1-c)$ and so, $\eta(1+c)=\eta(c)\eta(1-c)$, and so, $\eta(1+c),\eta(1-c)$ have opposite signs if and only if $\eta(c)=-1$. For simplicity, we write in the proof below $a_p, a_{p^n}$, in lieu of $a_p^{(c)},a_{p^n}^{(c)}$, and will all refer to the curve $y^2=x^3-x$ over $\F_p$, respectively, $\F_{p^n}$.

\noindent
{\bf Case 1.}
Let $n$ be even and  $p\equiv 3\pmod 8$ or $p\equiv 7\pmod 8$. 
%2 is a square, $a_p=0$
Since $a_p=0$ this case, then the Frobenius polynomial of the curve $y^2=x^3-x$ becomes $T^2+p=(T-\sqrt{-p})(T+\sqrt{-p})$, which renders $a_{p^n}=2(-p)^{\frac{n}{2}}$.
We now apply Theorem~\ref{thm:eta=1}~$(i)$, obtaining claim $(i)$ of our theorem.

\noindent
{\bf Case 2.}
Let $p\equiv 1\pmod 8$ or $p\equiv 5\pmod 8$ and $n$ even. Let  $p=a^2+b^2$, where $b$ is even.  %2 is a square, $a_p=2a$
In this case, the Frobenius polynomials for $y^2=x^3-x$ is $T^2-2aT+p=(T-\alpha)(T-\beta)$, where $\alpha=a+b c, \beta=a-b c$ (recall that $c$ is a root of $-1$). Therefore, $a_{p^n}=(a+bc)^n+(a-bc)^n$. We thus get claim $(ii)\,(1)$, using Theorem~\ref{thm:eta=1}~$(i)$.

\noindent
{\bf Case 3.}
Let $p\equiv 5\pmod 8$ and $n$ odd, $p=a^2+b^2$, $b$ even. 
%2 is not a square, $a_p=2a$
As above, $a_{p^n}=(a+bc)^n+(a-bc)^n$. We know that $2$ is not a square residue. We therefore get $(ii)\, (2)$, via Theorem~\ref{thm:eta=1}~$(ii/iii)$ (note that both cases give the same spectrum in this case, due to $\eta(c)=-1$).

The theorem is shown.
\end{proof}
 
%Recall $2$ is a quadratic residue if and only if $p \equiv \pm 1 \pmod 8$, or,  $p\equiv \pm 3\pmod 8$ and $n$ even.

We continue with the case of $c\neq 0,\pm 1$, $\eta(-1)=-1$ (that is,  $p\equiv 3\pmod 4$ and $n$ odd).
\begin{theorem} 
Let $F(x)=x^d$, where $d=\frac{p^n +1}{2}$ ($p$ odd prime, $n\geq 1$). If $c\neq0,\pm1$ and $\eta(-1)=-1$, then,    
\begin{itemize}
\item[$(i)$]
If $\eta(-1)=-1$ and $\eta(1-c^2)=1$$:$
\allowdisplaybreaks
\begin{align*}
_c{\omega_0}&=\frac{1}{8}\left(3p^n+a_{p^n}^{(c)}-3\right),\\
_c{\omega_1}&=\frac{1}{4}\left(p^n-a_{p^n}^{(c)}+1\right),\\
_c{\omega_2}&=\frac{1}{8}\left(3p^n+a_{p^n}^{(c)}+1\right).
 \end{align*}
\item[$(ii)$]
If $\eta(-1)=-1$ and $\eta(1-c)=1,\eta(1+c)=-1$$:$
\allowdisplaybreaks
\begin{align*}
_c{\omega_0}&=\frac{1}{8}\left(3p^n+a_{p^n}^{(c)}-21\right),\\
_c{\omega_1}&=\frac{1}{4}\left(p^n-a_{p^n}^{(c)}+1\right),\\
_c{\omega_2}&=\frac{1}{8}\left(3p^n+a_{p^n}^{(c)}+19\right).
 \end{align*}
 
\item[$(iii)$]
If $\eta(-1)=-1$ and $\eta(1-c)=-1,\eta(1+c)=1$$:$ 
\allowdisplaybreaks
\begin{align*}
_c{\omega_0}&=\frac{1}{8}\left(3p^n+a_{p^n}^{(c)}+3+4\eta(2)\right),\\
 _c{\omega_1}&=\frac{1}{4}\left(p^n-a_{p^n}^{(c)}+1\right),\\
 _c{\omega_2}&=\frac{1}{8}\left(3p^n+a_{p^n}^{(c)}-5-4\eta(2)\right).
 \end{align*}
\end{itemize}
\end{theorem}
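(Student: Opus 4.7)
The plan is to run the same machinery developed in the proof of Theorem~\ref{thm:eta=1}, reusing the six candidate solutions $x_1=0$, $x_2=-1$ and $x_3,x_4,x_5,x_6$ to Equation~\eqref{eq:main} obtained by fixing the sign patterns of $(\eta(x+1),\eta(x))$, and the four global cases A, B, C, D based on the signs of $(\eta(1-c),\eta(1+c))$. The new ingredient is that $\eta(-1)=-1$ flips the signs in the formulas for $\eta(x_4),\eta(x_4+1),\eta(x_6),\eta(x_6+1)$, which in turn swaps which pairs of candidate solutions are simultaneously valid for a given $b$. I would therefore redo the case analysis to produce sub-cases A2, B2, C2, D2, listing for each the compatible solution sets together with the boundary values $b=\pm 1,\pm c$ where an extra root $x_1$ or $x_2$ appears.

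Next, each spectrum count $_c\omega_k$ is expressed as a disjoint union of the cyclotomic sets $S^c_{i,j,u,v}$ plus a finite boundary correction. The closed-form expressions for $|S^c_{i,j,u,v}|$ already derived in the proof of Theorem~\ref{thm:eta=1} are independent of the sign of $\eta(-1)$; only the grouping into spectrum counts depends on that sign. However, under $\eta(-1)=-1$ the substitution $b\mapsto -b$ introduces a sign in both cubic sums, giving
\[
\sum_{b\in\F_{p^n}}\eta((b^2-1)(b-c))=-\sum_{b\in\F_{p^n}}\eta((b^2-1)(b+c)),
\]
and similarly for $\sum_b\eta((b\pm 1)(b^2-c^2))$. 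In the aggregated spectrum counts these paired cubic sums therefore cancel (rather than double, as in the $\eta(-1)=1$ case), eliminating $A$ and $B$ from the final formulas. The quartic sum $C=\sum_b\eta((b^2-1)(b^2-c^2))$ is unchanged and still equals $a_{p^n}^{(c)}+3$ by the elliptic curve argument applied to $y^2=x(x-1)(x-c^2)$, since that argument did not use $\eta(-1)$.

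For case (i), where $\eta(1-c^2)=1$, the cancellation of $A$ and $B$ leaves $_c\omega_0,_c\omega_1,_c\omega_2$ as compact combinations of $p^n$ and $C$, plus the boundary terms from $b\in\{\pm 1,\pm c\}$, yielding the stated formulas. For cases (ii) and (iii), where $\eta(1-c^2)=-1$, all four candidate solutions $x_3,\ldots,x_6$ may a priori contribute, so the spectrum is derived by summing $|S^c_{i,j,u,v}|$ over the appropriate quadruples $(i,j,u,v)$. The asymmetry between (ii) and (iii), manifested as an $\eta(2)$ term that appears only in (iii), should arise from the boundary contributions at $b=\pm 1,\pm c$, whose structure is no longer symmetric under $c\mapsto -c$ once $\eta(-1)=-1$.

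The main obstacle is precisely this boundary bookkeeping in cases C2 and D2: when $\eta(-1)=-1$ the symmetries $|S^c_{i,j,u,v}|=|S^c_{j,i,v,u}|$ exploited in the $\eta(-1)=1$ analysis break, so I must re-derive which quadruples correspond to which $_c\omega_k$ and then re-sum the explicit formulas. I expect that a careful split by $\eta(2)$ and $\eta(c)$ is required, and that the various $\eta(c)$-dependent boundary contributions collapse in the final expressions for (ii) and (iii), leaving only the single $\eta(2)$ term visible in case (iii). A sanity check for small $p^n$ with $\eta(-1)=-1$ (for instance $p=3,n=1$ or $p=7,n=1$) would pin down the sign conventions before finalizing.
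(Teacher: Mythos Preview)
Your plan is essentially the paper's own approach: rerun the six-candidate analysis under $\eta(-1)=-1$, split into sub-cases A2, B2, C2, D2 by the sign of $(\eta(1-c),\eta(1+c))$, express each $_c\omega_k$ as a union of cyclotomic sets $S^c_{i,j,u,v}$ with boundary corrections at $b\in\{\pm 1,\pm c\}$, and substitute $C=a_{p^n}^{(c)}+3$.

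One claim, however, is wrong and could derail the computation if taken literally: the closed-form expressions for $|S^c_{i,j,u,v}|$ listed in the proof of Theorem~\ref{thm:eta=1} are \emph{not} independent of $\eta(-1)$. The boundary correction at $b=-1$ carries a factor $1+i\eta(-2)$, at $b=-c$ a factor $1+u\eta(-2c)$, etc., and all of these flip sign when $\eta(-1)$ flips. Moreover, as you yourself note, the cubic pairs satisfy $A'=-A$ and $B'=-B$ under $\eta(-1)=-1$ rather than $A'=A,\ B'=B$, so the cubic contribution to $16|S^c_{i,j,u,v}|$ becomes $ij(u-v)A+uv(i-j)B$ instead of $ij(u+v)A+uv(i+j)B$. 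Thus both the ``bulk'' and the ``boundary'' parts of each $|S^c_{i,j,u,v}|$ change. The paper does not reuse the old list; it recomputes (``using similar techniques as in the case $\eta(-1)=1$'') and the listed formulas would give wrong constants if copied verbatim. Your later acknowledgement that the symmetries break and re-derivation is needed is correct; just be sure to re-derive the $|S^c_{i,j,u,v}|$ themselves, not only their grouping. With that correction, your outline matches the paper exactly.
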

\begin{proof}
%Similar for $\eta(-1)=-1$: 
\noindent {\bf Case A2.} $\eta(-1)=-1$, $\eta(1-c)=1,\eta(1+c)=1$. Then:
\allowdisplaybreaks
\begin{align*}
\eta(x_3)&=\eta(b-1)=1, \eta(x_3+1)=\eta(b-c)=1,\\
\eta(x_4)&=\eta(b+1)=1, \eta(x_4+1)=\eta(b+c)=1,\\
\eta(x_5)&=\eta(b-1)=-1, \eta(x_5+1)=\eta(b+c)=1,\\
\eta(x_6)&=\eta(b+1)=-1, \eta(x_6+1)=\eta(b-c)=1.
\end{align*}

We see that, for a fixed $b$, if $x_3$ exists, then $x_5$ cannot exist, and vice-versa. Also, if $x_4$ exists, then $x_6$ cannot exist, and vice-versa. We can at most have $x_3$ and $x_4$, or $x_3$ and  $x_6$, or $x_4$ and $x_5$, or  $x_5$ and $x_6$. 

NB: For $b=1$, if $\eta(2)=1$ (that is, $p\equiv \pm 1\pmod 8$, or,  $p\equiv \pm 3\pmod 8$ and $n$ even), we get a solution $x_4\neq0$. If $\eta(2)=-1$ (that is,   $p\equiv \pm 3\pmod 8$ and $n$ odd), we get a solution $x_6\neq0$. Regardless, we can see that $b=1$ gives two distinct solutions.
For $b=c$, if $\eta(2c)=1$, we get two distinct solutions, $x_4, x_5$. 

The distribution in this case is: 
\begin{itemize}
\item If $\eta(2c)=1$, 
\allowdisplaybreaks
\begin{align*}
_c{\omega_1}&=|S^c_{1,1,1,-1}\cup S^c_{1,1,-1,1}\cup S^c_{-1,-1,1,-1}\cup S^c_{-1,-1,-1,1}\cup\{b=-c\}|,\\
_c{\omega_2}&=|S^c_{1,1,1,1}\cup S^c_{1,-1,1,1}\cup S^c_{1,-1,1,-1}\cup S^c_{-1,1,1,1}\\
&\qquad\qquad\qquad\qquad  \cup S^c_{-1,1,-1,1}\cup S^c_{-1,-1,1,1}\cup\{b=1,c\}|,\\
_c{\omega_0}&=p^n-(_c{\omega_1}+\,_c{\omega_2}).
\end{align*}
\item
If $\eta(2c)=-1$,
\allowdisplaybreaks
\begin{align*}
_c{\omega_1}&=|S^c_{1,1,1,-1}\cup S^c_{1,1,-1,1}\cup S^c_{-1,-1,1,-1}\cup S^c_{-1,-1,-1,1}\cup\{b=-c\}|,\\
_c{\omega_2}&=|S^c_{1,1,1,1}\cup S^c_{1,-1,1,1}\cup S^c_{1,-1,1,-1}\cup S^c_{-1,1,1,1}\\
&\qquad\qquad\qquad\qquad \cup S^c_{-1,1,-1,1}\cup S^c_{-1,-1,1,1}\cup\{b=1\}|,\\
_c{\omega_0}&={p^n}-(_c{\omega_1}+\,_c{\omega_2}).
\end{align*}
\end{itemize}
Using similar techniques as in the case $\eta(-1)=1$, we see that the spectrum for A2 is
\allowdisplaybreaks
\begin{align*}
_c{\omega_1}&=\frac{1}{4}(p^n-C)+1=\frac{1}{4}(p^n-C+4),\\
_c{\omega_2}&=\frac{1}{8}(3p^n+C-2),\\
_c{\omega_0}&=\frac{1}{8}(3p^n+C-6).
\end{align*}
NB: The cases $\eta(2c)=1$ and $\eta(2c)=-1$ yield the same result.

\noindent {\bf Case B2.}
 $\eta(-1)=-1$, $\eta(1-c)=-1,\eta(1+c)=-1$. Then:
\begin{itemize}
\item If $\eta(2c)=1$:
\allowdisplaybreaks
\begin{align*}
_c{\omega_1}&=|S^c_{-1,-1,-1,1}\cup S^c_{-1,-1,1,-1}\cup S^c_{1,1,-1,1}\cup S^c_{1,1,1,-1}\cup\{b=-c\}|,\\
_c{\omega_2}&=|S^c_{-1,-1,-1,-1}\cup S^c_{-1,1,-1,-1}\cup S^c_{-1,1,-1,1}\cup S^c_{1,-1,-1,-1}\\
&\qquad\qquad\qquad\qquad \cup S^c_{1,-1,1,-1}\cup S^c_{1,1,-1,-1}\cup\{b=1\}|,\\
_c{\omega_0}&={p^n}-(_c{\omega_1}+\,_c{\omega_2}).
\end{align*}

\item 
If $\eta(2c)=-1$:
\allowdisplaybreaks
\begin{align*}
_c{\omega_1}&=|S^c_{-1,-1,-1,1}\cup S^c_{-1,-1,1,-1}\cup S^c_{1,1,-1,1}\cup S^c_{1,1,1,-1}\cup\{b=-c\}|,\\
_c{\omega_2}=&|S^c_{-1,-1,-1,-1}\cup S^c_{-1,1,-1,-1}\cup S^c_{-1,1,-1,1}\cup S^c_{1,-1,-1,-1}\\
&\qquad\qquad\qquad\qquad \cup S^c_{1,-1,1,-1}\cup S^c_{1,1,-1,-1}\cup\{b=1,c\}|,\\
_c{\omega_0}&={p^n}-(_c{\omega_1}+\,_c{\omega_2}).
\end{align*}
\end{itemize}

In this case, then, we obtain the same spectrum as for A2, that is 
\allowdisplaybreaks
\begin{align*}
_c{\omega_1}&= \frac{1}{4}(p^n-C+4),\\
_c{\omega_2}&=\frac{1}{8}(3p^n+C-2),\\
_c{\omega_0}&=\frac{1}{8}(3p^n+C-6).
\end{align*}

\noindent {\bf Case C2.}  $\eta(-1)=-1$, $\eta(1-c)=1,\eta(1+c)=-1$. Then:
\allowdisplaybreaks
\begin{align*}
\eta(x_3)&=\eta(b-1)=1, \eta(x_3+1)=\eta(b-c)=1,\\
\eta(x_4)&=\eta(b+1)=1, \eta(x_4+1)=\eta(b+c)=1,\\
\eta(x_5)&=-\eta(b-1)=-1\Rightarrow\eta(b-1)=1,\\
\eta(x_5+1)&=-\eta(b+c)=1\Rightarrow\eta(b+c)=-1,\\
\eta(x_6)&=-\eta(b+1)=-1\Rightarrow\eta(b+1)=1,\\
\eta(x_6+1)&=-\eta(b-c)=1\Rightarrow\eta(b-c)=-1.
\end{align*}

We see that, for a fixed $b$, if $x_3$ exists, then $x_6$ cannot exist, and viceversa. Also, if $x_4$ exists, then $x_5$ cannot exist, and viceversa. We can at most have $x_3$ and $x_4$, or $x_3$ and $x_5$, or $x_4$ and $x_6$, or  $x_5$ and $x_6$. 

NB: For $b=-1$, we can see that, if $\eta(2)=-1$, we get the distinct solutions $x_3$ and $x_5$. In this case, we can see that $b=-1$ gives two distinct solutions. For $b=-c$, if $\eta(2c)=1$, we get the solution $x_6\neq-1$. If $\eta(2c)=-1$, we get the solution $x_3\neq-1$.  We can see that $b=-c$ always gives two distinct solutions.

The distribution in this case is:
\begin{itemize}
\item 
If $\eta(2)=1$:
\allowdisplaybreaks
\begin{align*}
_c{\omega_0}&=|S^c_{1,-1,-1,1}\cup S^c_{-1,1,1,-1}\cup S^c_{-1,-1,1,1}\cup S^c_{-1,-1,1,-1}\\
&\qquad\qquad\qquad \cup S^c_{-1,-1,-1,1} \cup S^c_{-1,-1,-1,-1}\cup\{b=-1,c\}|,\\
_c{\omega_1}&=|S^c_{1,-1,1,1}\cup S^c_{1,-1,-1,-1}\cup S^c_{-1,1,1,1}\cup S^c_{-1,1,-1,-1}\cup\{b=1\}|,\\
_c{\omega_2}&={p^n}-(_c{\omega_0}+\, _c{\omega_2}).
\end{align*}

\item 
If $\eta(2)=-1$:
\allowdisplaybreaks
\begin{align*}
_c{\omega_0}&=|S^c_{1,-1,-1,1}\cup S^c_{-1,1,1,-1}\cup S^c_{-1,-1,1,1}\cup S^c_{-1,-1,1,-1}\\
&\qquad\qquad\qquad\qquad \cup S^c_{-1,-1,-1,1}\cup S^c_{-1,-1,-1,-1}\cup\{b=c\}|,\\
_c{\omega_1}&=|S^c_{1,-1,1,1}\cup S^c_{1,-1,-1,-1}\cup S^c_{-1,1,1,1}\cup S^c_{-1,1,-1,-1}\cup\{b=1\}|,\\
_c{\omega_2}&={p^n}-(_c{\omega_0}+\, _c{\omega_2}).
\end{align*}
\end{itemize}

Here the spectrum is therefore
%If $\eta(2)=1$:
\allowdisplaybreaks
\begin{align*}
_c{\omega_0}&=\frac{1}{8}(3p^n-40+C)+2=\frac{1}{8}(3p^n+C-24),\\
_c{\omega_1}&=\frac{1}{4}(p^n-C)+1=\frac{1}{4}(p^n-C+4),\\
 _c{\omega_2}&=\frac{1}{8}(5p^n+40+C)-3=\frac{1}{8}(3p^n+C+16).
 \end{align*}

\noindent {\bf Case D2.}  $\eta(-1)=-1$, $\eta(1-c)=-1,\eta(1+c)=1$. Then:
\allowdisplaybreaks
\begin{align*}
_c{\omega_0}&=|S^c_{-1,1,1,-1}\cup S^c_{1,-1,-1,1}\cup S^c_{1,1,-1,-1}\cup S^c_{1,1,-1,1}\\
&\qquad\qquad\qquad\qquad\cup S^c_{1,1,1,-1}\cup S^c_{1,1,1,1}\cup\{b=-1,c\}|\\
_c{\omega_1}&=|S^c_{-1,1,-1,-1}\cup S^c_{-1,1,1,1}\cup S^c_{1,-1,-1,-1}\cup S^c_{1,-1,1,1}\cup\{b=1\}|,\\
_c{\omega_2}&={p^n}-(_c{\omega_0}+\, _c{\omega_2}).
\end{align*}
%Note that this implies that, for $c\neq0,\pm1$,  the $c$-differential uniformity is upper bounded by four, and that it is upper bounded by two unless $\eta(-1)=1$ and there exist $b,c$ such that:
%$\eta(1-c)=\pm1=-\eta(1+c)$ and $(\eta(b-1),\eta(b+1),\eta(b-c),\eta(b+c))=(\pm1,\mp1,\pm1,\mp1)$. 

Here the spectrum is 
\allowdisplaybreaks
\begin{align*}
_c{\omega_0}&=\frac{1}{8}(3p^n+C-16+8\eta(2))+2=\frac{1}{8}(3p^n+C+4\eta(2)),\\
_c{\omega_1}&=\frac{1}{4}(p^n-C)+1=\frac{1}{4}(p^n-C+4),\\
_c{\omega_2}&=\frac{1}{8}(5p^n+C+16-8\eta(2))-3=\frac{1}{8}(3p^n+C-8-4\eta(2)).
\end{align*}
Setting the value of $C=  a_{p^n}^{(c)}+3$ as before, we obtain the claimed results.
\end{proof}

\section{Conclusions}

In this paper we concentrate on the map $x\mapsto x^{\frac{p^n+1}{2}}$ on a finite field $\F_{p^n}$, $p$ being an odd prime and $n$ a positive integer, and compute the complete $c$-differential spectrum, by using Weil sums and elliptic curve methods. Perhaps the methods we used can be applied to finding the spectrum of other interesting maps, with potential low differential or $c$-differential uniformity.

\vspace{.3cm}

\noindent
{\bf Acknowledgements.}
The authors would like to thank Tor Helleseth and Daniel Katz for interesting discussions and for comments on a previous draft.

%\vspace{.6cm}
%
%\noindent {\bf Ethical Approval and Consent to participate}\newline
%All authors entered the research freely  with full information about what it means for them to take part, and that they give consent before they entered the research.

%\vspace{.3cm}
%
%\noindent {\bf  Consent for publication}\newline
%All authors agreed to the submission.
%
%\vspace{.3cm}
%
%\noindent {\bf  Availability of supporting data}\newline
%There is no supporting data.
%
%\vspace{.3cm}
%
%\noindent {\bf   Competing interests}\newline
%There are no competing interests in this research.
%
%\vspace{.3cm}
%
%\noindent {\bf  Funding}\newline
%The research has not been supported by outside funding.
%
%\vspace{.3cm}
%
%\noindent {\bf  Authors' contributions}\newline
%All authors contributed equally to the research.

\end{document}